\documentclass{article}
\usepackage[utf8]{inputenc}
\setlength{\parindent}{0pt}

\usepackage[numbers]{natbib}
\usepackage{graphicx}
\usepackage{geometry}
\usepackage[font=scriptsize]{caption}
 \geometry{
 a4paper,
 total={170mm,257mm},
 left=20mm,
 top=20mm,
 }
 
 \usepackage{float}
 
\usepackage{titlesec}

\usepackage{pdfpages}

\setcounter{secnumdepth}{4} 

\titleformat{\paragraph}
{\normalfont\normalsize\bfseries}{\theparagraph}{1em}{}
\titlespacing*{\paragraph}
{0pt}{3.25ex plus 1ex minus .2ex}{1.5ex plus .2ex}
 
\usepackage{caption}
\usepackage{subcaption}
 
\usepackage{amsmath,amssymb}
\usepackage{bbm} 
\usepackage{amsthm} 
\numberwithin{equation}{section}
\usepackage{mathrsfs} 
\allowdisplaybreaks
\usepackage{mathtools}
\usepackage[ruled,vlined]{algorithm2e}

\SetKwInput{KwInput}{Input}                
\SetKwInput{KwOutput}{Output}              

\newcommand{\deriv}[1]{\frac{d}{d#1}}

\newcommand\given[1][]{\:#1\vert\:}

\DeclareMathOperator{\E}{\mathbb{E}}
\DeclareMathOperator{\Prob}{\mathbb{P}}
\DeclareMathOperator*{\argmax}{arg\,max}


\newcommand\norm[1]{\left\lVert#1\right\rVert}

\newtheorem{definition}{Definition}[section]
\newtheorem{proposition}{Proposition}[section]




\usepackage{tcolorbox}

\usepackage{bm}

\usepackage{environ}
\NewEnviron{eqs*}{%
\begin{equation*}\begin{split}
  \BODY
\end{split}\end{equation*}
}

\usepackage{titling}

\usepackage{authblk}

\usepackage[breaklinks,plainpages=false,hyperfootnotes=false]{hyperref}
\hypersetup{
  colorlinks   = true, 
  urlcolor     = blue, 
  linkcolor    = red, 
  citecolor   = blue 
}

\author[1]{R.L. Gudmundarson \footnote{E-mail address:
    \href{mailto:rlg2000@hw.ac.uk}{rlg2000@hw.ac.uk}}}
\author[2]{M. Guerra \footnote{E-mail address: \href{mailto:mguerra@iseg.ulisboa.pt}{mguerra@iseg.ulisboa.pt}}}
\author[2]{A. B. de Moura \footnote{E-mail address:
    \href{mailto:amoura@iseg.ulisboa.pt}{amoura@iseg.ulisboa.pt}}}
    \affil[1]{Edinburgh Business School, Heriot-Watt University}
\affil[2]{ISEG-School of Economics and Management, Universidade de Lisboa; REM - 
Research in Economics and Mathematics, CEMAPRE}

\title{Minimizing Ruin Probability Under Dependencies for
Insurance Pricing}
\date{\vspace{-6ex}}

\begin{document}
\maketitle
\allowdisplaybreaks

\addcontentsline{toc}{section}{abstract}
\begin{abstract}
In this work the ruin probability of the Lundberg risk process is used as a criterion for determining the optimal security loading of premia in the presence of price-sensitive demand for insurance. Both single and aggregated claim processes are considered and the independent and the dependent cases are analyzed. For the single-risk case, we show that the optimal loading does not depend on the initial reserve. In the multiple risk case we account for arbitrary dependency structures between different risks and for dependencies between the probabilities of a client acquiring policies for different risks. In this case, the optimal loadings depend on the initial reserve. In all cases the loadings minimizing the ruin probability do not coincide with the loadings maximizing the expected profit
\end{abstract}


\pagenumbering{arabic} 

\section{Introduction}
\label{section:litReview}

Insurance is based on the idea that society asks for protection against unforeseeable events which may cause serious financial damage. Insurance companies offer a financial protection against these events. The general idea is to build a community where everybody contributes a certain amount and those who are exposed to the damage receive financial reimbursement \cite{wuthrich2019non} . 
\smallskip

When (non-life) insurers set premium prices they usually start by finding the so-called pure premium, which is the expected value of the total claims that will occur in one time unit. However, when pricing insurance policies, insurers must take into account the risk associated with the policy as well as additional costs (e.g. operational cost, capital cost, etc.). Therefore, a so-called security loading is added to cover the risk and additional costs. The security loading is often calculated using some premium calculation principle, 
and the insurance premium is obtained once the security loading has been determined and added to the pure premium. The main concerns are usually whether the loading is an appropriate measure of the risk and which premium principle to choose. The higher the loading the higher the premium and consequently, the underwriting risk will be lower. However, if the premium price is too high then the exposure will be too low due to competition, and the operational cost of the insurer will engulf the premium income resulting in financial instability. Therefore, insurers usually require sophisticated premium calculations in order to secure stability.
\smallskip

Collective risk models are fundamental in actuarial science to model the aggregate claim amount of a line of business in an insurance company. The collective risk model has two random components, the number of claims and the severity of claims, and is usually modelled with a compound process \cite[Chapter~3]{kaas2008modern}. The classical Lundberg risk process has been studied extensively and there exist many variations, for example including reinsurance or investments \cite{hipp2000optimal}. It assumes that premia come in a continuous stream while claims happen at discrete times according to a Poisson distribution.
\smallskip

Another common assumption is that the risk can be divided into groups of homogeneous risks such that the pure premia and security loadings can be estimated separately for each risk group. The pure premia of these individual groups are usually modelled with generalized linear models (GLM). GLM's have been applied extensively in actuarial work and a good overview is provided in \cite{ohlsson2010glm, yao2013generalized}. Traditional risk theory has usually assumed independence between risks due to its convenience, but it is generally not very realistic. Claims in an insurer's risk portfolio are correlated as they are subject to the same event causes \cite{campana2005distortion}. Completely homogeneous risk groups are extremely rare and dependence among risks has become a flourishing topic in actuarial literature \cite{Copulas_Barning}. Dependence has mostly been measured through linear correlation coefficients \cite{britt2005linear}. The popularity of linear coefficient is mainly due to the ease with which dependencies can be parameterized, in terms of correlation matrices. Most random variables, however, are not jointly elliptically distributed and it could be very misleading to use linear coefficients \cite{straumann2001correlation}. This motivated the use of concordance measures. Two random variables are concordant when large values of one go with large values of the other \cite{nelsen2007introductionCopulas}. The Lundberg risk model is a Lévy jump process, \cite{ContTankov2004Jumps} which means that the dependency of two claim processes is best explained through their Lévy measure \cite{tankov2016levy}. This study will not go into details about Lévy processes, but both \cite{ContTankov2004Jumps} and \cite{papapantoleon2008introduction} provide a very good introduction, and \cite{Copulas_Barning, BAUERLE2011398, sato1999levy, van2012parameter, avanzi_cassar_wong_2011} are examples of applications of Lévy copulas to risk processes. For example, van Velesen \cite{van2012parameter} showed how Lévy copulas can be used in operational modelling and discussed how dependence is implied by the Lévy copula. In this work we consider bivariate claim processes, but the presented theory can be straightforwardly extended to multiple claim processes.
\smallskip

Ruin probability is a classical measure of risk and has been extensively studied \cite{kaas2008modern, hipp2000optimal,kasumo2018minimizing, trufin2011properties}. Although there is no absolute meaning to the probability of ruin, it still measures the stability of insurance companies. A high ruin probability indicates instability, and risk mitigation techniques should be used, like reinsurance or raising premia \cite{kaas2008modern}. Most non-life insurance products have a term of one year and therefore it can be argued that the one year ruin probability should be used. The one year ruin probability is the probability that the capital of an insurance company will hit zero within one year. However, the appropriateness of risk measures defined over fixed time horizons can be questioned, since ruin in a given time span can be minimized by increasing the probability of ruin in the aftermath of that period. Lundberg concluded that the actual assumptions behind the classical collective risk model are in fact less restrictive when time-invariant quantities like the infinite time ruin probability are considered \cite{trufin2011properties}. Therefore, we focus on the infinite time ruin probability in this paper.
\smallskip

In this work, the optimal loadings based on two strategies are derived, and compared. One strategy maximizes the profit and the other minimizes the ruin probability. We show that the two loading strategies give different results. Furthermore, we show how the optimal loading with respect to the ruin probability can be found and compare it to the one obtained when the expected profit is maximized. We consider dependencies and illustrate how Lévy copulas can be used to model claim process dependencies and how dependencies can affect the riskiness of the insurance portfolio. We take this idea further and consider dependency between the acquisition of insurance for different risks by policyholders. This is a realistic assumption as policyholders usually buy multiple insurance products from the same insurance company. We also take into account the fact that the market risk process and the company's risk process are not the same, and how the company's risk process depends on its exposure to the market. This is, to our knowledge, the first analysis of the interplay of the ruin probability, the dependency structure of claim, and the dependency structure of acquisition of insurance. We demonstrate that even if there is a strong dependency between insurance products within the market, small insurance companies have less dependency and therefore less risk than bigger insurance companies, provided the dependency between acquisition of insurance for different risks is not too strong. 
\medskip

The paper is organized as follows: Section \ref{section:min_ruin} contains some background material about ruin probabilities in the Lundberg process and aggregation of compound Poisson processes. Section \ref{section:demand} deals with the single-risk case. We characterize the optimal loading and compare it with the loading maximizing the expected profit.  Section \ref{section:companyRisk} handles the multiple risks case. We show how the dependency structure existing in the market (i.e. the general population) translates into the risk exposure of the company through its market shares on different risks and the likelihood that clients acquire insurance for more than one risk. Section \ref{seq:numerical} contains a numerical illustration. A numerical scheme to compute the ruin probabilities is given in the appendix.


\section{Preliminaries}
\label{section:min_ruin}

\subsection{Claim and Surplus Processes}

The Lundberg risk model describes the evolution of the capital of an insurance company and assumes that exposure is constant in time, losses follow a compound Poisson process, and premia arrive at a fixed continuous rate:

$$ X_t = u + ct-\sum_{i=0}^{N_t} Y_i = u + ct - S_t, \qquad Y_0 \coloneqq 0  ,$$

where $u$ is the initial surplus, $c$ is the risk premium rate, $N_t$ is a time homogeneous Poisson process with intensity parameter $\lambda$, and  $Y_i$ are i.i.d. random variables representing the severity of claim $i$, $i = 0,\dots, N_t$. Here it is assumed that $Y_i$ are positive. In the following sections, $Y$ denotes an arbitrary random variable with the same distribution as any $Y_i$. The severity distribution is denoted as $F(x)$ and the severity survival distribution as $\overline{F}(x)$. $S_t$ is a compound Poisson process and thus $X_t$ is a stochastic process (sometimes called the surplus process) representing the insurance wealth at time $t$. $X_t$ increases because of earned premia and decreases when claims occur. When the capital of an insurance company hits zero, the insurance company is said to be ruined. Formally, the ruin probability is defined as follows.

\begin{definition}[Probability of Ruin]
\label{SimpleRuin}
Let $(\Omega, \mathcal{F}, \{\mathcal{F}_t\}_{t\geq0}, \mathbbm{P})$ be a filtered probability space and $X = (X_t)_{t \in [0, \infty[}$ a surplus process which is adapted and Markov with respect to the filtration. The state space is $(\mathbbm{R}, \mathcal{B}(\mathbbm{R}))$. If X is time homogeneous, the infinite time ruin probability is the function $V: \mathbbm{R} \mapsto [0,1]$ such that

$$V(x) = \Prob \big( \exists s\in[0,+\infty[:X_s \leq 0  \given[\big]   X_0 = x \big), \quad x \in \mathbbm{R} .$$

\end{definition}

Sometimes it is useful to use the survival (non-ruin) probability, defined as $\overline{V}(x) = 1-V(x)$. The ruin probability can be calculated using the following integro-differential equation \cite{grandell2012aspects}.

\begin{proposition}
\label{prop:Ruin_lundberg_inf}
Assume that $X_t$ is defined as above and the premium rate satisfies $c > \lambda \E[Y]$. If $V \in C^1(]0, \infty[),$ then the probability of ruin with infinite time horizon satisfies the following equation:
\begin{align}
\label{eq:Ruin_lundberg_inf}
    0 =  c \deriv{x}V(x) + \lambda \bigg( \int_{0}^{x} V(x-y)dF(y) - V(x) + 1-F(x) \bigg), \quad x>0,
\end{align}
with the following boundary condition:
\[
\begin{cases}
   V(x) = 1  & x\leq 0,\\
  \lim_{x \to 0^+}V(x) = \frac{\lambda}{c}\E[Y] . \\
\end{cases}
\]
Furthermore, the probability of non-ruin satisfies the following equation:
\begin{equation}
\label{eq:Ruin_lundberg_inf_survival}
    \overline{V}(x)-\overline{V}(\epsilon) = \frac{\lambda}{c}\int_\epsilon^x\overline{V}(x-y)\overline{F}(y)dy
\end{equation}
for $0 < \epsilon \leq x <  + \infty$  with the following boundary condition:
\[
\begin{cases}
   \overline{V}(x) = 0  & x\leq 0,\\
  \lim_{x \to 0^+} \overline{V}(x) =1- \frac{\lambda}{c}\E[Y] . \\
\end{cases}
\]
\end{proposition}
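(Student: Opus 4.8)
The plan is to obtain the integro-differential equation \eqref{eq:Ruin_lundberg_inf} by a first-claim (renewal) decomposition that exploits the strong Markov property, and then to deduce the survival equation \eqref{eq:Ruin_lundberg_inf_survival} and the non-trivial boundary value by integrating it. Fix an initial surplus $x>0$. Until the first claim the surplus rises deterministically along $x+ct$; the first claim arrives at an $\mathrm{Exp}(\lambda)$ time $T_1$ and has size $Y_1\sim F$. Conditioning on $(T_1,Y_1)$ and restarting the process at the post-jump level $x+cT_1-Y_1$ via the strong Markov property, I would write
\begin{equation*}
V(x) = \int_0^\infty \lambda e^{-\lambda t}\left( \int_0^{x+ct} V(x+ct-y)\,dF(y) + \overline{F}(x+ct) \right) dt,
\end{equation*}
where $\overline{F}(x+ct)=1-F(x+ct)$ is the probability that this first claim already forces $X\le 0$. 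The substitution $s=x+ct$ recasts this as $V(x)=\frac{\lambda}{c}e^{\lambda x/c}\int_x^\infty e^{-\lambda s/c}g(s)\,ds$ with $g(s)=\int_0^s V(s-y)\,dF(y)+\overline{F}(s)$.

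Differentiating this representation in $x$ — the hypothesis $V\in C^1(]0,\infty[)$ guarantees enough regularity for the Leibniz rule — the boundary contribution and the exponential prefactor combine into $c\,V'(x)=\lambda V(x)-\lambda g(x)$, which is exactly \eqref{eq:Ruin_lundberg_inf} after rearrangement. The condition $V(x)=1$ for $x\le 0$ is immediate from Definition \ref{SimpleRuin}, since ruin has then already occurred at $s=0$.

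For the survival equation I would set $\overline V=1-V$ and substitute into \eqref{eq:Ruin_lundberg_inf}; using $\int_0^x V(x-y)\,dF(y)=F(x)-\int_0^x\overline V(x-y)\,dF(y)$ (valid because $F(0)=0$) collapses it to the compact form $c\,\overline V'(x)=\lambda\big(\overline V(x)-\int_0^x\overline V(x-y)\,dF(y)\big)$. The key observation is that, after the substitution $u=x-y$ and using $\overline F(0)=1$, one has $\frac{d}{dx}\int_0^x\overline V(x-y)\overline F(y)\,dy=\overline V(x)-\int_0^x\overline V(x-y)\,dF(y)$, so that $\overline V'=\frac{\lambda}{c}\frac{d}{dx}\int_0^x\overline V(x-y)\overline F(y)\,dy$; integrating this identity between $\epsilon$ and $x$ yields \eqref{eq:Ruin_lundberg_inf_survival}. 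The non-trivial boundary value then drops out by letting $\epsilon\to0^+$ and $x\to\infty$: under the net-profit condition $c>\lambda\E[Y]$ one has $\overline V(\infty)=1$, and since $\int_0^\infty\overline F(y)\,dy=\E[Y]$ the right-hand side converges to $\frac{\lambda}{c}\E[Y]$, giving $\overline V(0^+)=1-\frac{\lambda}{c}\E[Y]$ and hence $\lim_{x\to0^+}V(x)=\frac{\lambda}{c}\E[Y]$.

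The step I expect to be the main obstacle is the behaviour at the lower endpoint when passing to \eqref{eq:Ruin_lundberg_inf_survival}. Integrating the identity between $\epsilon$ and $x$ produces $\overline V(x)-\overline V(\epsilon)$ on the left but a convolution over $[0,x]$ minus its value at $\epsilon$ on the right, so matching the stated integral $\int_\epsilon^x$ requires a careful treatment of the cut-off, and this limit is delicate precisely because $\overline V$ is discontinuous at the origin (it equals $0$ for $x\le0$ but tends to $1-\frac{\lambda}{c}\E[Y]$ from the right). The accompanying double limit $\epsilon\to0^+$, $x\to\infty$ used for the boundary value needs a monotone/dominated-convergence justification together with the a priori fact $\overline V(\infty)=1$; if one prefers to avoid invoking $\overline V(\infty)$, the value $V(0^+)=\frac{\lambda}{c}\E[Y]$ can instead be read off directly from the first-claim representation as $x\to0^+$.
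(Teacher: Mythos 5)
Your argument is correct in substance, and it follows the classical route: the paper itself gives \emph{no} proof of this proposition (it is quoted as background, with a citation to Grandell), and the first-claim/renewal decomposition you use --- conditioning on $(T_1,Y_1)$, substituting $s=x+ct$, and differentiating the resulting representation to get $cV'(x)=\lambda V(x)-\lambda g(x)$ --- is exactly the standard derivation in that literature. The passage to $\overline{V}$ and the boundary value via $\epsilon\to0^+$, $x\to+\infty$ with dominated convergence (dominating by the integrable $\overline{F}$, and using the a priori fact $\overline{V}(+\infty)=1$, which follows from the strong law of large numbers under $c>\lambda\E[Y]$) are also sound. The only routine caveat: differentiating the convolution terms pointwise tacitly assumes $F$ is absolutely continuous; for general $F$ one works with the integrated equation via Fubini and integration by parts, but since atoms of $F$ would be incompatible with the hypothesis $V\in C^1$, this is harmless.

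The step you single out as the ``main obstacle'' is not a gap in your proof --- it exposes a defect in the paper's statement. Integrating your identity $\overline{V}'(t)=\frac{\lambda}{c}\frac{d}{dt}\int_0^t\overline{V}(t-y)\overline{F}(y)\,dy$ over $[\epsilon,x]$ gives
\begin{equation*}
\overline{V}(x)-\overline{V}(\epsilon)=\frac{\lambda}{c}\left(\int_0^x\overline{V}(x-y)\overline{F}(y)\,dy-\int_0^\epsilon\overline{V}(\epsilon-y)\overline{F}(y)\,dy\right),
\end{equation*}
whose right-hand side differs from the stated $\frac{\lambda}{c}\int_\epsilon^x\overline{V}(x-y)\overline{F}(y)\,dy$ by $\frac{\lambda}{c}\int_0^\epsilon\big(\overline{V}(x-y)-\overline{V}(\epsilon-y)\big)\overline{F}(y)\,dy$, which does not vanish for fixed $\epsilon>0$. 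Concretely, for $\mathrm{Exp}(1)$ claims with $\lambda=1$, $c=2$, one has $\overline{V}(x)=1-\frac{1}{2}e^{-x/2}$, and at $\epsilon=1$, $x=2$ the left-hand side of \eqref{eq:Ruin_lundberg_inf_survival} is $\approx 0.119$ while its right-hand side is $\approx 0.072$: equation \eqref{eq:Ruin_lundberg_inf_survival} is false for fixed $\epsilon>0$ and only becomes correct in the limit $\epsilon\to0^+$ (the extra term is bounded by $\epsilon$), where it reads $\overline{V}(x)-\overline{V}(0^+)=\frac{\lambda}{c}\int_0^x\overline{V}(x-y)\overline{F}(y)\,dy$ --- precisely the form the paper itself uses in Appendix A. So do not contort your derivation to reproduce the stated $\int_\epsilon^x$; prove the displayed identity above, then take $\epsilon\to0^+$ and $x\to+\infty$ exactly as you outline to obtain the boundary value $\lim_{x\to0^+}V(x)=\frac{\lambda}{c}\E[Y]$.
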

\medskip

A numerical scheme solving equation \eqref{eq:Ruin_lundberg_inf} can be found in Appendix \ref{appendix:NA_ez_Poi_Exp_inf}.

\subsection{Accounting for Claim Dependencies}

Consider the surplus process $\bm{X} = (X_t^{(1)},...,X_t^{(n)}) $ where
\begin{equation}
\label{multiclaims}
    \begin{split}
        X_t^{(1)} &= u^{(1)}+ c^{(1)}t - \sum_{i = 0}^{N_t^{(1)}}Y_i^{(1)} \\
        \vdots &\\
        X_t^{(n)} &= u^{(n)}+ c^{(n)}t - \sum_{i = 0}^{N_t^{(n)}}Y_i^{(n)}  \\
    \end{split}
\end{equation}

If these processes are independent, it is relatively easy to combine them into a single process using the aggregation property of compound Poisson processes as described in W\"{u}trich \cite{wuthrich2019non}. The aggregation property allows the combination of multiple surplus processes into a single risk process as follows:
\begin{equation*}
    X_t = \sum_{j= 1}^n u^{(j)} +\sum_{j = 1}^n c^{(j)}t -\sum_{i = 0}^{N_t}Y_i,
\end{equation*}
where $N_t$ is a Poisson r.v. with $\lambda = \lambda_{1} + ...+ \lambda_{n}$ and $Y_i$ are i.i.d. random variables, which follow the severity distribution $F(x) = \sum_{j = 1}^n \frac{\lambda_j}{\lambda} F_j(x)$. This aggregation property allows us to use the integro-differential equation \eqref{eq:Ruin_lundberg_inf} to calculate the ruin of multiple surplus processes.
\smallskip

If the risks are not independent, then we can use the fact that compound Poisson processes are characterized by their Lévy measure to decompose the claim process into independent processes to which the aggregation property can be applied.
In particular, for $n = 2$ risks, we obtain the decomposition:
\begin{equation*}
    \begin{split}
          X_t = X_t^{(1)} + X_t^{(2)} = u + ct - S_t^{1\perp} - S_t^{2\perp} -S_t^\parallel.
    \end{split}
\end{equation*}
where $S^{1\perp}$ and $S^{2\perp}$ are compound Poisson processes accounting for events concerning only risk 1 and risk 2, respectively. $S^{\parallel}$ is a compound Poisson process accounting for events concerning both risks simultaneously. Furthermore, $S^{1\perp}$, $S^{2\perp}$ and $S^{\parallel}$ are mutually independent. \smallskip

In this section, we briefly explain how this can be achieved. Further details can be found in \cite{tankov2016levy}. We will use the following definitions:

\begin{definition}
The tail integral of a Lévy measure $\nu$ on $[0, \infty]^2$ is given by a function $U:[0, \infty]^2 \mapsto [0, \infty]$
\begin{equation}
\label{eq:tail_int}
    \begin{split}
        & U(x_1,x_2) = 0 \quad if \quad x_1 = \infty \quad or \quad x_2 = \infty ,\\
        & U(x_1,x_2) = \nu\big([x_1, \infty[ \times [x_2, \infty[ \big) \quad for \quad (x_1,x_2) \in ]0, \infty[^2 , \\
        & U(0,0) = \infty.
    \end{split}
    \end{equation}
\end{definition}

\begin{definition}[Lévy Copula for Processes with Positive Jumps]
A two-dimensional Lévy copula for Lévy processes with positive jumps, or for short, a positive Lévy copula, is a 2-increasing grounded function $\mathcal{C}: [0,\infty]^2 \to [0, \infty]$ with uniform margins, that is, $\mathcal{C}(x,\infty) = \mathcal{C}(\infty,x) = x$.
\end{definition}

Similarly to Sklar's theorem for ordinary copulas \cite{nelsen2007introductionCopulas}, it has been shown that the dependency structure of $(X_t^{(1)},X_t^{(2)})$ can be characterized by a Levy copula $\mathcal{C}$ such that $\mathcal{C}(U_1(x_1), U_2(x_2))$ where $U_1$ and $U_2$ are the marginal tail integrals for $X_t^{(1)}$ and $X_t^{(2)}$. If $U_1$ and $U_2$ are absolutely continuous, this Lévy copula is unique, otherwise it is unique on $Range(U_1)\times Range(U_2)$, the product of ranges of one-dimensional tail integrals, \cite[Theorem~5.4]{ContTankov2004Jumps}
\smallskip

Consider a two dimensional claim process:
\begin{equation}
    \label{eq:2DimClaimProcess}
    S_t = (S_t^{(1)}, S_t^{(2)}) = \sum_{i = 0}^{N_t} (Y_i^{(1)},Y_i^{(2)}),
\end{equation}
where $N_t$ is a Poisson process with intensity $\lambda$ and $Y_i = (Y_i^{(1)},Y_i^{(2)})$, $i \in \mathbbm{N}$ are independent random variables with common joint distribution $F_Y$. The components of $S$, $S^{(1)}$ and $S^{(2)}$, are one-dimensional compound Poisson processes with intensities $\lambda_1$ and $\lambda_2$ and severity distributions $F_{Y^{(1)}}$ and $F_{Y^{(2)}}$, respectively. 
We wish to obtain a decomposition:
\begin{equation}
\label{eq:bracketSum}
     (S_t^{(1)}, S_t^{(2)}) = \sum_{i = 0}^{N_t^{1\perp}}  (Y_i^{(1\perp)}, 0) + \sum_{i = 0}^{N_t^{2\perp}}  (0, Y_i^{(2\perp)}) + \sum_{i = 0}^{N_t^{\parallel}}(Y_i^{(1\parallel)}, Y_i^{(2\parallel)}),
\end{equation}
where $\sum_{i = 0}^{N_t^{1\perp}}  Y_i^{(1\perp)}$,$\sum_{i = 0}^{N_t^{2\perp}} Y_i^{(2\perp)}$ and $\sum_{i = 0}^{N_t^{\parallel}}(Y_i^{(1\parallel)}, Y_i^{(2\parallel)})$ are independent compound Poisson processes with intensities $\lambda_1^\perp$, $\lambda_2^\perp$, $\lambda^\parallel$ and severity distributions $F_{Y^{1\perp}}$,$F_{Y^{2\perp}}$,$F_{Y^{\parallel}}$, respectively. In the above setting, we consider \begin{equation}
\label{eq:AllDistZero}
    F_Y(0,0) = F_{Y^{\parallel}}(0,0) = 0, \quad F_{Y^{(1)}}(0) = F_{Y^{1\perp}}(0) = F_{Y^{(2)}}(0) = F_{Y^{2\perp}}(0) 
\end{equation}

A compound Poisson process, $S$, is a Lévy process with Lévy measure $\nu(dx) = \lambda dF(x)$, with tail integral
\begin{equation*}
    U(x_1,x_2) = 
    \begin{cases}
    \lambda \Prob \big(Y^{(1)} \geq x_1, Y^{(2)} \geq x_2 \big) & \textrm{if } x_1 >0 \textrm{ or } x_2 >0 \\
    +\infty & \textrm{if } x_1 = x_2 = 0 .
    \end{cases}
\end{equation*}
The components $S^{(1)}$ and $S^{(2)}$ are independent if and only if $U(x_1, x_2) = 0$ for every $(x_1, x_2) \in ]0, + \infty[$, i.e., if and only if $\lim_{x_1 \to 0^{+}, x_2 \to 0^{+}} U(x_1, x_2) = 0 $.
\smallskip

The Lévy measure of the processes $S^{(i)}$, $i = 1,2$, have tail integrals
\begin{equation*}
    \begin{split}
        U_1(x_1) &= \lambda_1 \Prob \big( Y^{(1)} \geq x_1 \big) = U(x_1,0) \\
        U_2(x_2) &= \lambda_2 \Prob \big( Y^{(2)} \geq x_2 \big) = U(0,x_2)
    \end{split}
\end{equation*}
Taking equation \eqref{eq:AllDistZero} into account, one obtains 
\begin{equation*}
     \lambda_i = \lim_{x_i \to 0^{+}} U_i(x_i), \quad i = 1,2 
\end{equation*}
\begin{equation*}
     \lambda =\lim_{x_1, x_2 \to 0^{+}} \big(U_1(x_1) + U_2(x_2) - U(x_1,x_2) \big) 
     = \lambda_1 + \lambda_2 - \lambda^\parallel
\end{equation*}
\begin{equation*}
    \lambda^\parallel =\lim_{x_1, x_2 \to 0^{+}} U(x_1,x_2) 
\end{equation*}
\begin{equation*}
     \lambda_i^\perp = \lambda_i - \lambda^\parallel, \quad i = 1,2 .
\end{equation*}
The severity distributions $F_{Y^{1\perp}}$, $F_{Y^{2\perp}}$, and $F_{Y^{\parallel}}$ can be recovered from the tail integrals:
\begin{equation*}
    \Prob \big(Y^{1\perp} \geq x_1 \big) = \frac{1}{\lambda_1^\perp} \lim_{x_2 \to 0^{+}} \Big( U_1(x_1) - U(x_1,x_2) \Big)
\end{equation*}
\begin{equation*}
    \Prob \big(Y^{2\perp} \geq x_2 \big) = \frac{1}{\lambda_2^\perp} \lim_{x_1 \to 0^{+}} \Big( U_2(x_2) - U(x_1,x_2) \Big)
\end{equation*}
\begin{equation*}
    \Prob \big(Y^{1\perp} \geq x_1, Y^{2\perp} \geq x_2 \big) = \frac{1}{\lambda^\parallel}  U(x_1,x_2) .
\end{equation*}

If the dependency between $S^{(1)}$ and $S^{(2)}$ is characterized by a Lévy copula, $\mathcal{C}$, i.e. $U(x_1, x_2) = \mathcal{C}(U_1(x_1), U_2(x_2))$ for $(x_1, x_2) \in [0, +\infty[^2$, then the relations above can be written using the Lévy copula and one-dimensional tail integrals:
\begin{equation*}
\lambda^\parallel = \lim_{u_1 \to \lambda_1^{+}, u_2 \to \lambda_2^{+}} \mathcal{C}(u_1, u_2)
\end{equation*}
\begin{equation*}
    \Prob \big(Y^{1\perp} \geq x_1 \big) = \frac{1}{\lambda_1^\perp} \lim_{u_2 \to \lambda_2^{+}} \Big( U_1(x_1) - \mathcal{C}(U_1(x_1),u_2) \Big)
\end{equation*}
\begin{equation*}
    \Prob \big(Y^{2\perp} \geq x_2 \big) = \frac{1}{\lambda_2^\perp} \lim_{u_1 \to \lambda_1^{+}} \Big( U_2(x_2) - \mathcal{C}(u_1, U_2(x_2)) \Big)
\end{equation*}
\begin{equation*}
    \Prob \big(Y^{1\perp} \geq x_1, Y^{2\perp} \geq x_2 \big) = \frac{1}{\lambda^\parallel}  \mathcal{C}(U_1(x_1), U_2(x_2)) .
\end{equation*}

Using the above methodology, the surplus process 
can be represented as:
\begin{equation*}
        X_t = u + ct - \sum_{i = 0}^{N_t^{1\perp}} Y_i^{1\perp} - \sum_{i = 0}^{N_t^{2\perp}} Y_i^{2\perp} -  \sum_{i = 0}^{N_t^{\parallel}} ( Y_i^{1\parallel} + Y_i ^{2 \parallel} ) =
        u +ct - \sum_{i = 1}^{N_t^{*}} Y_i^{*} ,
\end{equation*}
where $u = u_1 + u_2$, $c = c_1 + c_2$, $N^{*}$ is a Poisson process with intensity $\lambda = \lambda_1^\perp + \lambda_2^\perp + \lambda^\parallel$ and $Y_i^{*}$ are i.i.d. random variables with distribution:
\begin{equation*}
    F^{*} = \frac{\lambda_1^\perp}{\lambda} F_{Y^{1\perp}} + \frac{\lambda_2^\perp}{\lambda} F_{Y^{2\perp}} + \frac{\lambda^\parallel}{\lambda} F_{Y^{1\parallel} + Y^{2\parallel}} ,
\end{equation*}
where
\begin{equation*}
    F_{Y^{1\parallel} + Y^{2\parallel}}(x)  = \int_{x_1 + x_2 \leq x} dF_{Y^\parallel}(x_1,x_2).
\end{equation*}

\section{The Optimal Loading for a Single Risk}
\label{section:demand}

An insurer can control the volume of its business through the premium loading $\theta$. A reasonable assumption is that the higher the loading, the smaller the number of contracts in its portfolio, which means that the claim intensity (or business volume) will decrease. Therefore, both the claim intensity $\E^\theta[N_1]$, and the premium rate $c(\theta)$, will depend on $\theta$. It is reasonable to assume that $\E^\infty[N_1] = 0$, as abnormal premium rates will not attract customers \cite{hipp2004insurancecontrol}. To capture these concepts let $\E^\theta[N_1] = \lambda p(\theta)$. Here $\lambda$ is the average number of claims per unit of time for the whole market, and $p(\theta)$ is the probability that a potential claim is filed as an actual claim to the particular insurer under consideration. In other words, $p(\theta)$ reflects the demand or the market share sensitivity to the loading parameter $\theta$. $p(\theta)$ can be interpreted as a probability that a customer buys an insurance product. For example, we may assume that demand of insurance contracts is described by a logit glm model as in Hardin and Tabari \cite{hardin2017renewal}. Thus, $p(\theta)$, will be :
\begin{equation}
\label{eq:logit}
    p(\theta) = \frac{1}{1+e^{\beta_0 + \beta_1 \theta}},
\end{equation}
where $\beta_0$ and $\beta_1$ are determined from the glm and $\theta$ is the loading parameter. $\beta_1$ will be a positive number so $p\to 0$ when $\theta \to \infty$ and $p\to 1$ when $\theta \to -\infty$. Assuming that the company has some fixed costs, independent of the risk exposure, denoted by $r > 0$, the expression for the net premium income becomes:
\begin{equation*}
    c(\theta) = (1 + \theta)\E^\theta[N_1]\E[Y] -r.
\end{equation*}

The following proposition characterizes the behaviour of the solution of equation \eqref{eq:Ruin_lundberg_inf} with respect to the loading $\theta$.

\begin{proposition}
\label{prop:alpha_proof}
If $V(x,\theta)$ satisfies equation \eqref{eq:Ruin_lundberg_inf} then $V(x,\theta)$ is strictly increasing with respect to the parameter $\alpha= \frac{ \E^\theta [N_1]}{c(\theta)}$.
\end{proposition}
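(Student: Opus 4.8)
The plan rests on first observing that the loading enters the problem only through $\alpha$. Writing the effective intensity $\E^\theta[N_1]$ in place of $\lambda$ and $c(\theta)$ in place of $c$ and dividing \eqref{eq:Ruin_lundberg_inf} by $c(\theta)$, both the integro-differential equation and the boundary value $\lim_{x\to0^+}V(x)=\frac{\E^\theta[N_1]}{c(\theta)}\E[Y]=\alpha\E[Y]$ depend on $\theta$ only through $\alpha$, since the market severity law $F$ and its mean $\E[Y]$ are held fixed. Hence the solution may be regarded as a function $V(x,\alpha)$, and the assertion is that $\pderiv{\alpha}V(x,\alpha)>0$ for every $x>0$. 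I would work instead with the survival probability, for which \eqref{eq:Ruin_lundberg_inf_survival} together with $\overline V(0^+)=1-\alpha\E[Y]$ yields, after letting $\epsilon\to0^+$, the linear Volterra equation
\begin{equation*}
\overline V(x,\alpha)=1-\alpha\E[Y]+\alpha\int_0^x\overline V(x-y,\alpha)\,\overline F(y)\,dy,\qquad x\ge 0,
\end{equation*}
and show the equivalent statement that $\overline V$ is strictly decreasing in $\alpha$.

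The next step is to differentiate this identity in $\alpha$. Setting $\overline W\coloneqq\pderiv{\alpha}\overline V$ and using the integral equation itself to eliminate the term $\int_0^x\overline V(x-y)\overline F(y)\,dy=\frac1\alpha\big(\overline V(x)-1\big)+\E[Y]$, the two $\E[Y]$ contributions cancel and I obtain a Volterra equation of the second kind for $\overline W$,
\begin{equation*}
\overline W(x)=-\frac{V(x,\alpha)}{\alpha}+\alpha\int_0^x\overline W(x-y)\,\overline F(y)\,dy .
\end{equation*}
The crucial structural feature is that this equation has a \emph{nonnegative} convolution kernel $\alpha\overline F\ge0$ and a \emph{strictly negative} forcing term $-V(x,\alpha)/\alpha$, the latter because $0<V(x,\alpha)<1$ for $x>0$ under the positive safety loading.

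To conclude I would invoke the resolvent representation of a linear Volterra equation: since the kernel is nonnegative and bounded (by $\alpha$, using $\overline F\le1$), its Neumann series $\sum_{n\ge0}(\alpha\overline F)^{*n}$ converges on every bounded interval and defines a nonnegative resolvent kernel. Thus $\overline W$ is the sum of the forcing term and the image of that same forcing under a nonnegative operator; both contributions are nonpositive and the forcing is strictly negative on $]0,\infty[$, whence $\overline W(x)<0$, i.e. $\pderiv{\alpha}V(x,\alpha)>0$, which is the claim. The steps that require genuine care, rather than routine computation, are the justification that $V(x,\alpha)$ is differentiable in $\alpha$ (which follows from the smooth parameter dependence of the solution of a parametrised Volterra equation) and the bound $0<V<1$; the sign of the resolvent and the convergence of the Neumann series are standard. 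As an independent check, the Pollaczek--Khinchine representation $\overline V(x)=(1-\rho)\sum_{n\ge0}\rho^{n}F_I^{*n}(x)$ with $\rho=\alpha\E[Y]$ exhibits $\overline V(x,\cdot)$ as the expectation of the nonincreasing sequence $n\mapsto F_I^{*n}(x)$ under a geometric law that is stochastically increasing in $\rho$, which makes the monotonicity transparent.
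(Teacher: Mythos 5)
Your proposal is correct, and although it shares the paper's basic mechanism --- differentiate a linear Volterra equation with respect to $\alpha$ and exploit positivity of the associated Neumann/resolvent series --- it is organized differently enough to be worth contrasting. The paper works with the ruin probability itself: it integrates \eqref{eq:Ruin_lundberg_inf} to get $V=\alpha(g+\Psi V)$, proves the series representation $V=\sum_{n\ge 0}\alpha^{n+1}\Psi^n g$, differentiates it term by term to obtain $\frac{d}{d\alpha}V=\frac{1}{\alpha^2}u_{\alpha,V}$, and then needs a separate integration-by-parts lemma to see that $\Psi$ preserves positivity (modulo an evident sign typo there: the kernel is $1-F(x-z)=\overline{F}(x-z)$, not $1+F(x-z)$). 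You work instead with the survival probability: letting $\epsilon\to 0^+$ in \eqref{eq:Ruin_lundberg_inf_survival} gives a renewal-type equation whose convolution kernel $\alpha\overline{F}$ is nonnegative by inspection, and implicit differentiation plus elimination of the inhomogeneity yields your equation for $\overline{W}=\partial_\alpha\overline{V}$, which is exactly the survival-side counterpart of the paper's identity (indeed $\overline{W}=-\frac{1}{\alpha^2}u_{\alpha,V}$ in the paper's notation, once one recognizes $\Psi$ as convolution with $\overline{F}$). What your formulation buys is that positivity of the resolvent is immediate, with no integration-by-parts step; what it costs is that differentiability of $\overline{V}(x,\cdot)$ in $\alpha$ must be justified before you may differentiate the equation --- in a final write-up you should do this as the paper does, via locally uniform convergence in $\alpha$ of the Neumann series (whose terms are polynomials in $\alpha$), rather than by appeal to generic ``smooth parameter dependence''. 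Two smaller points: only $V>0$, not the full bound $0<V<1$, is needed for the strict sign of the forcing term, and $V>0$ itself deserves a one-line argument (with positive probability sufficiently many large claims arrive early); and your Pollaczek--Khinchine remark is more than a sanity check --- since $\rho=\alpha\E[Y]$ is increasing in $\alpha$, the geometric law is stochastically increasing in $\rho$, and $n\mapsto F_I^{*n}(x)$ is nonincreasing, it can be promoted to a self-contained probabilistic proof of the proposition that avoids differentiation altogether.
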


\begin{proof}
It is possible to integrate Equation \eqref{eq:Ruin_lundberg_inf} on the interval $]0, x]$ to obtain:
\begin{equation}
\label{eq:omega_prove_1}
    \begin{split}
        V(x, \theta) = \frac{ \E^\theta [N_1]}{c(\theta)} \Bigg( \E[Y] + \int_0^x \Big( V(z, \theta) - \int_0^z V(z-y)dF(y) + F(z) -1 \Big)dz \Bigg) .
    \end{split}
\end{equation}

To prove the proposition, we will study equations of the general form:
\begin{equation}
\label{eq:alpha_prove_2}
    \begin{split}
        u(x) = \alpha \Bigg( g(x) + \int_0^x \Big( u(z) - \int_0^z u(z-y)dF(y) \Big)dz \Bigg) .
    \end{split}
\end{equation}

We introduce the operator $\Psi$, acting on measurable locally bounded functions $h:[0, +\infty] \mapsto \mathbbm{R}$, as:
\begin{equation}
\label{eq:IntegralOperator}
(\Psi h)(x) = \int_0^x \Big( h(z) - \int_0^z h(z-y)dF(y) \Big)dz, \quad x \geq 0 
\end{equation}
Notice that the transformation $h \mapsto \Psi h$ is linear and for every $h$, $\Psi h : [0, + \infty[ \mapsto \mathbbm{R}$ is continuous, hence measurable and locally bounded. Thus, powers of the operator $\Psi$ are defined in the usual way.
\begin{equation*}
\Psi^0h = h, \qquad \Psi^nh = \Psi (\Psi^{n-1}h), \quad  n \in \mathbb N.
\end{equation*}

Let, $\norm{h}_{[0,x]} = \sup_{z \in [0,x]} |h(z)| $. Then:
\begin{equation*}
    \begin{split}
    | (\Psi h)(x) |  &\leq \int_0^x \Big( |h(z)| +\int_0^z |h(z-y)|dF(y) \Big)dz \leq 2x \norm{h}_{[0,x]}.
    \end{split} 
\end{equation*}
If the inequality 
\begin{equation}
\label{Eq bound Psi n}
    \norm{(\Psi^n h)}_{[0,x]} \leq \frac{2^n x^n}{n!} \norm{h}_{[0,x]}
\end{equation}
holds, for some $n \in \mathbbm{N}$, then
\begin{equation*}
    \begin{split}
    | (\Psi^{n+1} h)(x) |  &\leq \int_0^x \Big( | (\Psi^{n}h)(z)| +\int_0^z |(\Psi^{n}h)(z-y)|dF(y) \Big)dz \\
     &\leq \int_0^x 2 \frac{2^n z^n}{n!} \norm{h}_{[0,x]} dz = \frac{2^{n+1} x^{n+1}}{(n+1)!} \norm{h}_{[0,x]}.
    \end{split} 
\end{equation*}
Thus, by induction, \eqref{Eq bound Psi n} holds for every $n \in \mathbb N$.
%
%
Therefore, for every $x \in [0, \infty[$, fixed, there is some $n \in \mathbb N$ such that $\Psi^n$ is a contraction in the space of measureable and bounded functions $h:[0,x] \mapsto \mathbbm{R}$. It follows from the contraction principle that equation \eqref{eq:omega_prove_1} has one unique solution. Further, $\lim_{n \to \infty} (\alpha^n \Psi^n) h = 0$, uniformly in $[0,x]$ for any given $h$ and any fixed $x \in [0, +\infty[$. 

Let $u_{\alpha, g}$ be the solution of equation \eqref{eq:alpha_prove_2}  for given $g$ and $\alpha$. Then,
\begin{equation*}
    \begin{split}
        u_{\alpha,g} &= \alpha (g + \Psi u_{\alpha,g}) = \alpha g + \alpha \Psi(\alpha(g + \Psi u_{\alpha,g})) 
        = \alpha g + \alpha^2 \Psi g + \alpha^2 \Psi u_{\alpha,g} \\ 
        &= \alpha g + \alpha^2 \Psi g + \dots +\alpha^{n+1} \Psi^n g  +\alpha^{n+2} \Psi^{n +1}  u_{\alpha,g} .
    \end{split}
\end{equation*}
Since $\lim_{n \to \infty} \alpha^n \Psi^n u_{\alpha,g}(x) = 0$, this shows that $u_{\alpha,g}$ admits the series representation:
\begin{equation*}
    \begin{split}
        u_{\alpha,g} &= \sum_{n=0}^\infty\alpha^{n+1} \Psi^n g ,
    \end{split}
\end{equation*}
which converges uniformly with respect to $\alpha$ on compact intervals. 
Thus, we can differentiate term by term and obtain
\begin{equation*}
    \begin{split}
        \frac{d}{d\alpha} u_{\alpha,g}(x) &= \sum_{n=0}^\infty (n+1)\alpha^n(\Psi^n g)(x)\\
        &= \sum_{n = 0}^\infty \alpha^n (\Psi^n g)(x) + \sum_{n = 1}^\infty n \alpha^n (\Psi^n g)(x) \\
        &= \frac{1}{\alpha}u_{\alpha,g} + \sum_{n=1}^\infty \alpha^n (\Psi^n g)(x) + \sum_{n=2}^\infty (n-1) \alpha^n (\Psi^n g)(x)\\
        &= \frac{1}{\alpha}u_{\alpha,g} + \sum_{n=0}^\infty \alpha^{n+1} (\Psi^{n+1} g)(x) + \sum_{n=1}^\infty n \alpha^{n+1} (\Psi^{n+1} g)(x) \\
        &= \frac{1}{\alpha} u_{\alpha,g} + (\Psi u_{\alpha,g})(x) + \sum_{n=1}^\infty \alpha^{n+1} (\Psi^{n+1} g)(x) + \sum_{n=2}^\infty (n-1) \alpha^{n+1} (\Psi^{n+1} g)(x)  \\
        &= \frac{1}{\alpha} u{\alpha,g} + (\Psi u_{\alpha,g})(x) + \sum_{n=0}^\infty \alpha^{n+2} (\Psi^{n+2} g)(x) + \sum_{n=1}^\infty n \alpha^{n+2} (\Psi^{n+2} g)(x)  \\
        &= \frac{1}{\alpha} u_{\alpha,g} + (\Psi u_{\alpha,g})(x) + ( \alpha \Psi^2 u_{\alpha,g})(x) + \dots + ( \alpha^{k-1} \Psi^k u_{\alpha,g})(x) + \sum_{n = 1}^\infty n \alpha^{n+k} (\Psi^{n+k} g)(x) \\
        &= \sum_{n=0}^\infty \alpha^{n-1} (\Psi^n u_{\alpha,g})(x) = \frac{1}{\alpha^2} u_{\alpha,u_{\alpha,g}}.
    \end{split}
\end{equation*}

For any $h:[0,x] \mapsto \mathbbm{R}$ locally absolutely continuous function:
\begin{equation*}
    \begin{split}
        (\Psi h)(x) &= \int_0^x \Big( h(z) - \int_0^z h(z-y)dF(y) \Big)dz \\
        &= \int_0^x \Big(h(z) - [h(z-y)F(y)]_{y = 0}^{y = z} - \int_0^z h'(z-y)F(y)dy \Big)dz \\
        &= \int_0^x (h(z) - h(0)F(z))dz - \int_0^x \int_y^x h'(z-y)F(y)dzdy \\
        &= \int_0^x (h(z) - h(0)F(z))dz - \int_0^x \big(h(x-y) - h(0) \big)F(y)dy \\
        &= \int_0^x h(z)dz - \int_0^x h(x-y)F(y)dy = \int_0^x h(z) (1+F(x-z))dz .\\
    \end{split}
\end{equation*}
Thus, $h>0$ implies $(\Psi h)>0$, which implies $(\Psi^n h)>0$, $\forall n \in \mathbbm{N}$, and therefore $u_{\alpha,h}>0$ for any $\alpha >0$. This argument shows that $\frac{d}{d\alpha} V =\frac{1}{\alpha^2} u_{\alpha,V} >0$ as $V >0$. Therefore $V$ is strictly increasing with $\alpha$.
\end{proof}

According to Proposition \ref{prop:alpha_proof}, in order to find $\theta$ minimizing the probability of ruin, it is sufficient to find $\theta$ minimizing $\frac{\mathbb E^\theta[N_1]}{c(\theta)}$. For example, using the logit demand model \eqref{eq:logit}, the optimal loading is found with direct differentiation of $\alpha$ and is given by:
\begin{equation}
\label{eq:alpha_direct}
    \begin{split}
\theta_{ruin} = \frac{1}{\beta_1} \Big( \ln \big( \frac{\lambda \E[Y]}{r \beta_1}  \big) - \beta_0 \Big).
    \end{split}
\end{equation}
However, the loading that maximizes the expected profit is:
\begin{equation*}
    \begin{split}
    \theta_{profit} = \argmax_\theta 
        \E^\theta[X_1 \given[] X_0 = x] = \argmax_\theta \{ \theta\E^\theta[N_1]\E[Y] -r\},
    \end{split}
\end{equation*} 
which is, in the case of logit demand \eqref{eq:logit}, the unique solution of:
\begin{equation}
\label{eq:profit_equation}
    \begin{split}
    1 + e^{\beta_0 + \beta_1 \theta} - \beta_1 \theta e^{\beta_0 + \beta_1 \theta} = 0.
    \end{split}
\end{equation}
Thus, in general, $\theta_{ruin}$ does not coincide with $\theta_{profit}$.

\section{The Multiple Risk Case }
\label{section:companyRisk}

In this section, we explore how dependencies between risks available in an insurance market translate into risk exposure for a company through its market shares on the different risks. It turns out that this mechanism is non trivial when the risks are dependent. For the sake of simplicity, we assume that the company offers insurance for two risks in a market constituted by identical individuals, all of them exposed to both risks. Using the notation in equations \eqref{eq:2DimClaimProcess} and \eqref{eq:bracketSum} to denote the market claim process, $S_t = (S_t^{(1)}, S_t^{(2)})$ is the vector of the total (accumulated) amount of claims of each risk that occurred in the market, up to time $t$. The marginal distributions of $S^{(1)}$ and $S^{(2)}$ are characterized by claim intensities $\lambda_1$ and $\lambda_2$ and the severity distributions $F_{Y^{(1)}}$, $F_{Y^{(2)}}$ and their dependency structure is characterized by a parameter $\lambda^\parallel \in [0, \min(\lambda_1, \lambda_2)]$ and a joint distribution $F_{(Y^{1\parallel},Y^{2\parallel})}$, as explained in Section  $\ref{section:min_ruin}$.

\subsection{Risk Exposure as a Function of Market Shares}
\label{section:sub4.1}

To extend the demand model outlined in Section \ref{section:demand} to a market with multiple risks where the acquisition of insurance for different risks may not be independent, we propose the following interpretation for the function $p$.
\smallskip

Let $(\theta_1, \theta_2)$ be the loadings charged by the company for each risk. We assume that every  individual in the market (a potential client) is provided with a vector of bid prices ($b_1$, $b_2$). The client acquires the insurance for risk $i$ if $b_i \geq \theta_i$ (for convenience, we consider prices net of the pure premium). The distribution of the price vectors in the market is modelled by a random vector $B = (B_1, B_2)$. Thus, $p_i(\theta) = p_i(\theta_i) = \Prob \big( B_i \geq \theta_i \big)$ is the company's market share for the insurance of risk $i$ at equilibrium, given the loadings $\theta = ( \theta_1, \theta_2)$. Let $p^{(1,0)}$ be the proportion of individuals in the market holding a policy for risk 1 and no policy for risk 2. Similarly, $p^{(0,1)}(\theta)$ and $p^{(1,1)}(\theta)$ denote the proportion of individuals holding a policy only for risk 2 and for both risks, respectively. If the acquisition of polices for different risks is independent, then:
\begin{equation}
\label{eq:AcquisitionProbabilities}
    p^{(1,1)}(\theta) = p_1(\theta_1)p_2(\theta_2), \quad p^{(1,0)}(\theta) = p_1(\theta_1)(1-p_2(\theta_2)), \quad p^{(0,1)}(\theta) = p_2(\theta_2)(1-p_1(\theta_1)).
\end{equation}

Dependency between the acquisition of different risks can be introduced by considering dependent bid prices $B = (B_1, B_2)$. In particular, if the joint distribution of $B$ is characterized by an ordinary copula $C: [0,1]^2 \mapsto [0,1]$, then, according to Sklar's theorem $F_B(\theta_1, \theta_2) = C(F_{B_1}(\theta_1), F_{B_2}(\theta_2))$ \cite{nelsen2007introductionCopulas}. This gives:
\begin{equation*}
\label{eq:EQ8}
    \begin{split}
        p^{(1,0)} &= F_{B_2}(\theta_2^{-}) - C(F_{B_1}(\theta_1^{-}), F_{B_2}(\theta_2^{-})), \\
        p^{(0,1)} &= F_{B_1}(\theta_1^{-}) - C(F_{B_1}(\theta_1^{-}), F_{B_2}(\theta_2^{-})), \\
        p^{(1,1)} &=  1 -F_{B_1}(\theta_1^{-})- F_{B_2}(\theta_2^{-}) + C(F_{B_1}(\theta_1^{-}), F_{B_2}(\theta_2^{-})). \\
    \end{split}
\end{equation*}

Under this model, the company's surplus process is:
\begin{equation}
\label{eq:agg_company_claims}
    \Tilde{X}_t = u^{(1)} + u^{(2)} + \big( c^{(1)}(\theta_1) + c^{(2)}(\theta_2) \big)t - \sum_{i = 0}^{\Tilde{N}_t^{1\perp}} \Tilde{Y}_i^{1\perp} - \sum_{i = 0}^{\Tilde{N}_t^{2\perp}} \Tilde{Y}_i^{2\perp} - \sum_{i = 0}^{\Tilde{N}_t^{\parallel}} \Big(Y_i^{1\parallel} + Y_i^{2\parallel}  \Big),
\end{equation}
where $\Tilde{N}_t^{1\perp}$, $\Tilde{N}_t^{2\perp}$ , and $\Tilde{N}_t^{\parallel}$ count the number of claims received by the company concerning only risk 1, only risk 2, and both risks, respectively. Their intensities are, respectively,
\begin{equation*}
    \begin{split}
        \Tilde{\lambda}_1^\perp &= p^{(1,0)}(\theta) \big( \lambda_1^\perp + \lambda^\parallel \big) + p^{(1,1)}(\theta) \lambda_1^\perp = p_1(\theta_1) \lambda_1^\perp + p^{(1,0)}(\theta) \lambda^\parallel, \\
        \Tilde{\lambda}_2^\perp &= p_2(\theta_2) \lambda_2^\perp + p^{(0,1)}(\theta) \lambda^\parallel, \\
        \Tilde{\lambda}^\parallel &= p^{(1,1)}(\theta) \lambda^\parallel.
    \end{split}
\end{equation*}

The distribution of the single risk claim amounts $\Tilde{Y}^{1\perp}$ (resp., $\Tilde{Y}^{2\perp}$) is a mixture of the distributions $Y^{1\perp}$ and $Y^{1\parallel}$ (resp., $Y^{2\perp}$ and $Y^{2\parallel}$):
\begin{equation*}
    \begin{split}
     &F_{\Tilde{Y}^{1\perp}} = \frac{p_1\lambda_1^\perp }{p_1 \lambda_1^\perp + p^{(1,0)} \lambda^\parallel } F_{Y^{1\perp}} + \frac{p^{(1,0)} \lambda^\parallel }{p_1 \lambda_1^\perp + p^{(1,0)} \lambda^\parallel } F_{Y^{1\parallel}} \\
     &F_{\Tilde{Y}^{2\perp}} = \frac{p_2 \lambda_2^\perp }{p_2 \lambda_2^\perp + p^{(0,1)} \lambda^\parallel } F_{Y^{2\perp}} + \frac{p^{(0,1)} \lambda^\parallel }{p_2 \lambda_2^\perp + p^{(0,1)} \lambda^\parallel } F_{Y^{2\parallel}}
    \end{split}
\end{equation*}
This is because some customers insure risk 1, but not risk 2 and vice-versa. Therefore, the aggregate process for the insurer is
\begin{equation}
\label{eq:indp_company_claim}
    \Tilde{X}_t = u^{(1)} + u^{(2)} + \big( c^{(1)}(\theta_1) + c^{(2)}(\theta_2) \big)t - \sum_{i = 0}^{\Tilde{N}_t} \Tilde{Y}_i,
\end{equation}
where $\Tilde{N}_t$ is a Poisson process with intensity
\begin{equation}
\label{eq:lambda_tilde}
    \Tilde{\lambda} = p_1 \lambda_1^\perp + p_2 \lambda_2^\perp + \big( p^{(1,0)} + p^{(0,1)}+ p^{(1,1)} \big)\lambda^\parallel = p_1 \lambda_1 + p_2 \lambda_2 - p^{(1,1)}\lambda^\parallel ,
\end{equation}
and $\Tilde{Y}_i$, $i \in \mathbb N$ are i.i.d random variables with distribution
\begin{equation}
\label{eq:DepDist}
\begin{split}
    F_{\Tilde{Y}} &= \frac{p_1 \lambda_1^{\perp} }{\Tilde{\lambda}} F_{Y^{1 \perp}} + \frac{p_2 \lambda_2^{\perp} }{\Tilde{\lambda}} F_{Y^{2 \perp}} 
    + \frac{p^{(1,0)} \lambda^{\parallel} }{\Tilde{\lambda}} F_{Y^{1 \parallel}}
    + \frac{p^{(0,1)} \lambda^{\parallel} }{\Tilde{\lambda}} F_{Y^{2 \parallel}}
    + \frac{p^{(1,1)} \lambda^{\parallel} }{\Tilde{\lambda}} F_{Y^{1 \parallel}+ Y^{2 \parallel}} \\
    &=\frac{1}{p_1 \lambda_1 + p_2 \lambda_2 - p^{(1,1)} \lambda^\parallel} \bigg( p_1 \lambda_1 F_{Y^{1}} + p_2 \lambda_2 F_{Y^{2}} + p^{(1,1)}\lambda^\parallel \Big( F_{Y^{1\parallel} + Y^{2\parallel}} - F_{Y^{1\parallel}} - F_{Y^{2\parallel}} \Big) \bigg). 
    \end{split}
\end{equation}

Thus, if the risks in the market are independent (i.e. if $\lambda^\parallel = 0$), then the risk in the company's portfolio is just a sum of the risks $S^{(1)}$ and $S^{(2)}$, weighted by the respective market shares, $p_1$ and $p_2$, irrespective of any dependency between sales of policies for different risks. However, if the risks in the market are dependent ( $\lambda^\parallel \neq 0$), then the company's risk is not, in general, a weighted sum of $S^{(1)}$ and $S^{(2)}$. Further, this effect persists even in the case where sales of different policies are independent (i.e., $p^{(1,1,)} = p_1 p_2$).  On the other hand, equalities \eqref{eq:lambda_tilde} and \eqref{eq:DepDist} show that in the (unlikely) situation where clients always buy insurance for only one risk, the risk exposure of the insurer is accurately computed using only the marginal distributions of each risk (i.e. assuming that the risks are independent). This is due to the static nature of our model. For example, it does not take into account the possibility of external factors changing the frequency of claim events in both risks simultaneously.

\subsection{The Impact of Dependencies on Ruin Probability}

From the discussion above and Proposition \ref{prop:alpha_proof}, it follows that the ruin probability of a company with market shares ($p_1$, $p_2$, $p^{(1,1)}$) solves the equation 
\begin{align}
\label{eq:EQ3}
\frac{dV(x)}{dx} = &
\frac{\Tilde{\lambda}}{c^{(1)} + c^{(2)}} \Big( V(x) - \int_0^x V(x-y)dF_{\Tilde{Y}}(y) + F_{\Tilde{Y}}(x) -1 \Big),
\\ \label{eq:EQ4}
    V(0^{+}) =&
    \frac{\Tilde{\lambda}}{c^{(1)} + c^{(2)}} \E[\Tilde{Y}] ,
\end{align}
with $\Tilde{\lambda}$ and $F_{\Tilde{Y}}$ given by equations \eqref{eq:lambda_tilde} and \eqref{eq:DepDist}.
\smallskip

Since estimating the dependency structure may pose substantial difficulties, we may wish to have an a-priori bound for the error introduced by neglecting dependencies, that is, by substituting the probability $V_{ind}(x)$ for $V(x)$, where $V_{ind}(x)$ solves the equation.
\begin{equation}
    \label{eq:EQ5}
    \frac{dV(x)}{dx} = \frac{\hat{\lambda}}{c^{(1)} + c^{(2)}} \Big( V(x) - \int_0^x V(x-y)dF_{\hat{Y}}(y) + F_{\hat{Y}}(x) -1 \Big) ,
\end{equation}
where $\hat{\lambda} = \lambda_1p_1 + \lambda_2p_2$ and $F_{\hat{Y}}(x) = \frac{\lambda_1 p_1 F_{Y^{(1)}} + \lambda_2 p_2 F_{Y^{(2)}}}{\hat{\lambda}}$. Notice that $\hat{\lambda} \E[\hat{Y}] = \Tilde{\lambda} \E[\Tilde{\lambda}]$ and therefore the boundary condition for \eqref{eq:EQ5} is again \eqref{eq:EQ4}.
\smallskip

The discussion in Subsection \ref{section:sub4.1} shows that the difference $V(x) - V_{ind}(x)$ is expected to be small when $p^{(1,1)}$ is small compared to $p_1 + p_2$. The following proposition gives a precise meaning for this statement.

\begin{proposition}
\label{prop:P1}
With the notation above:
\begin{equation*}
    |V(x) - V_{ind}(x)| \leq p^{(1,1)} \lambda^\parallel \frac{e^{\frac{2\Tilde{\lambda}x}{c^{(1)} + c^{(2)}}} - 1}{\Tilde{\lambda}}
\end{equation*}
for every amount of initial reserve $x \geq 0$.
\end{proposition}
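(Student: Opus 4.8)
The plan is to combine the two integro-differential equations into a single renewal equation for the difference $D:=V-V_{ind}$ and then close it with a sharp Gronwall estimate. Throughout put $c:=c^{(1)}+c^{(2)}$ and $\delta:=p^{(1,1)}\lambda^\parallel$. The starting observation is that $V$ solves \eqref{eq:EQ3}--\eqref{eq:EQ4} and $V_{ind}$ solves \eqref{eq:EQ5} with the \emph{same} boundary value $V(0^+)=V_{ind}(0^+)=\frac{\Tilde\lambda}{c}\E[\Tilde Y]$; the equality of these two limits is precisely the identity $\Tilde\lambda\,\E[\Tilde Y]=\hat\lambda\,\E[\hat Y]$ noted before the statement, so that $D(0^+)=0$ (and trivially $D(0)=0$, which settles the case $x=0$).

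First I would subtract \eqref{eq:EQ5} from \eqref{eq:EQ3} and integrate over $]0,x]$, organising the bookkeeping with the two identities
$$\hat\lambda-\Tilde\lambda=\delta,\qquad \Tilde\lambda\,dF_{\Tilde Y}-\hat\lambda\,dF_{\hat Y}=\delta\,d\big(F_{Y^{1\parallel}+Y^{2\parallel}}-F_{Y^{1\parallel}}-F_{Y^{2\parallel}}\big),$$
which both follow from \eqref{eq:lambda_tilde}, \eqref{eq:DepDist} and the definition of $F_{\hat Y}$. Splitting each term into a ``difference of solutions'' piece and a ``difference of coefficients'' piece produces
$$D(x)=(\Phi D)(x)+R(x),\qquad (\Phi h)(x):=\frac{\Tilde\lambda}{c}\int_0^x\Big(h(z)-\int_0^z h(z-y)\,dF_{\Tilde Y}(y)\Big)\,dz,$$
where $\Phi$ is the operator $\Psi$ of Proposition \ref{prop:alpha_proof} built from the kernel $F_{\Tilde Y}$ and scaled by $\Tilde\lambda/c$, and the source is $R(x)=\frac{\delta}{c}\int_0^x S(z)\,dz$ with $S(z)=\big(1-V_{ind}(z)\big)+\int_0^z\big(1-V_{ind}(z-y)\big)\,d\big(F_{Y^{1\parallel}+Y^{2\parallel}}-F_{Y^{1\parallel}}-F_{Y^{2\parallel}}\big)(y)$.

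Next I would bound the source. Since $V_{ind}$ is a ruin probability we have $0\le V_{ind}\le 1$, hence $|1-V_{ind}|\le 1$; as each of the three distributions in $S$ is a (sub)probability measure on $[0,z]$, the three convolution integrals each lie in $[0,1]$, so $|S(z)|\le 2$ and therefore $|R(x)|\le\frac{2\delta}{c}\,x$. Writing $m(x):=\sup_{z\in[0,x]}|D(z)|$ and using $\int_0^z dF_{\Tilde Y}\le 1$ when estimating $\Phi$, the fixed-point identity gives $|D(x)|\le\frac{2\Tilde\lambda}{c}\int_0^x m(z)\,dz+\frac{2\delta}{c}x$; since the right-hand side is nondecreasing in $x$ this upgrades to
$$m(x)\le\frac{2\Tilde\lambda}{c}\int_0^x m(z)\,dz+\frac{2\delta}{c}\,x.$$

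Finally I would invoke Gronwall's inequality with $b:=\frac{2\Tilde\lambda}{c}$ and forcing $a(z):=\frac{2\delta}{c}z$. The decisive point, and the main obstacle, is that the crude estimate $m(x)\le a(x)e^{bx}$ only yields $\frac{2\delta x}{c}e^{2\Tilde\lambda x/c}$, which is strictly larger than the claimed bound; to land on the stated constant one must use the sharp form $m(x)\le a(x)+b\int_0^x a(z)e^{b(x-z)}\,dz$ and exploit the linearity of the forcing. The explicit evaluation $\int_0^x z\,e^{b(x-z)}\,dz=-\frac{x}{b}+\frac{e^{bx}-1}{b^2}$ cancels the two linear-in-$x$ terms and leaves exactly $m(x)\le\frac{\delta}{\Tilde\lambda}\big(e^{2\Tilde\lambda x/c}-1\big)=p^{(1,1)}\lambda^\parallel\,\frac{e^{2\Tilde\lambda x/c}-1}{\Tilde\lambda}$, as required. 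Thus the real effort lies in the subtraction-and-integration step, where the coefficient identities must isolate the single factor $\delta=p^{(1,1)}\lambda^\parallel$ in the source, and in remembering to use the sharp rather than the naive Gronwall bound.
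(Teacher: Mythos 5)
Your proposal is correct and takes essentially the same route as the paper: subtracting the integrated equations yields a renewal identity for $D = V - V_{ind}$ whose source term (your $\frac{\delta}{c}\int_0^x S(z)\,dz$, with $\delta = p^{(1,1)}\lambda^\parallel$ and $c = c^{(1)}+c^{(2)}$) is exactly the paper's three-bracket expression in \eqref{eq:EQ6}, your bounds $|S|\le 2$ and $|(\Phi D)(x)|\le \frac{2\Tilde{\lambda}}{c}\int_0^x\sup_{[0,z]}|D|\,dz$ coincide with the paper's, and both arguments close with Gr\"onwall's inequality. Your final observation is a worthwhile explication of a step the paper compresses into ``the result follows by Gr\"onwall's inequality'': the crude bound $a(x)e^{bx}$ would only give $\frac{2\delta x}{c}e^{2\Tilde{\lambda}x/c}$, and landing exactly on $p^{(1,1)}\lambda^\parallel\,\frac{e^{2\Tilde{\lambda}x/c}-1}{\Tilde{\lambda}}$ does require the sharp integral form of Gr\"onwall that you compute.
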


\begin{proof}
From equalities \eqref{eq:EQ3}, \eqref{eq:EQ4} and \eqref{eq:EQ5}, straightforward computations yield:
\begin{equation}
\label{eq:EQ6}
    \begin{split}
        V(x) -V_{ind}(x) &= \frac{p^{(1,1)} \lambda^\parallel}{c^{(1)} + c^{(1)}} \Bigg(  \int_0^x V_{ind}(z) - \int_0^z V_{ind}(z-y)dF_{Y^{1\parallel} + Y^{2\parallel}}(y) + F_{Y^{1\parallel} + Y^{2\parallel}}(z) -1 dz -  \\
        & \quad \int_0^x V_{ind}(z) - \int_0^z V_{ind}(z-y)dF_{Y^{1\parallel}}(y) + F_{Y^{1\parallel} }(z) -1 dz - \\
        & \quad \int_0^x V_{ind}(z) - \int_0^z V_{ind}(z-y)dF_{Y^{2\parallel}}(y) + F_{Y^{2\parallel}}(z) -1 dz -\Bigg)  \\
        & \quad + \frac{\Tilde{\lambda}}{c^{(1)} + c^{(2)}} \int_0^x (V - V_{ind})(z) - \int_0^z (V - V_{ind})(z-y)dF_{\Tilde{Y}}(y)dz
    \end{split}
\end{equation}

It can be checked that for every distribution function $G:[0, +\infty[ \mapsto [0,1]$,
\begin{equation*}
    -x \leq \int_0^x V_{ind}(z) -\int_0^z V_{ind}(z-y)dG(y) + G(z) -1dz \leq 0
\end{equation*}

Therefore, \eqref{eq:EQ6} implies:
\begin{equation*}
    \max_{y \in [0,x]} |V(x) - V_{ind}(x)| \leq \frac{ p^{(1,1)} \lambda^\parallel}{c^{(1)} + c^{(2)}}2x + \frac{\Tilde{\lambda}}{c^{(1)} + c^{(2)}} \int_0^x 2  \max_{y \in [0,z]} |V(x) - V_{ind}(y)|dz
\end{equation*}
Thus, the result follows by Grönwall's inequality \cite{dragomir2003some}.
\end{proof}

\subsection{The Impact of Dependencies on Small Companies}

Now, we proceed with the argument above to explore how dependencies affect companies of different size. We measure the size of the company by it's expected total value of claims, $\Tilde{\lambda} \E[\Tilde{Y}]$ and, to make comparisons meaningful, we consider that the total revenue is proportional to the company's size, i.e.
\begin{equation*}
    c^{(1)} + c^{(2)} = (1 + \theta) \Tilde{\lambda} \E [\Tilde{Y}], \quad \textrm{with } \theta>0 \textrm{ constant.}
\end{equation*}
Similarly, we consider the initial reserve to be proportional to size, i.e.:
\begin{equation*}
    x = x_o  \Tilde{\lambda} \E [\Tilde{Y}], \quad \textrm{with } x_0>0 \textrm{ constant.}
    \end{equation*}
Notice that, due to equations \eqref{eq:EQ3}, \eqref{eq:EQ4} and \eqref{eq:EQ5}, the effect of dependencies must be bounded in the sense that
\begin{equation*}
    |V(x_0\Tilde{\lambda}\E[\Tilde{Y}]) - V_{ind}(x_0\Tilde{\lambda}\E[\Tilde{Y}])| \leq K_1 x_0\Tilde{\lambda}\E[\Tilde{Y}] \leq K_2(p_1 + p_2) ,
\end{equation*}
for some constants $K_1, K_2 < +\infty$. However, we can use Proposition \ref{prop:P1} to obtain a better estimate:
\begin{equation}
    \label{eq:EQ7}
    |V(x_0\Tilde{\lambda}\E[\Tilde{Y}]) - V_{ind}(x_0\Tilde{\lambda}\E[\Tilde{Y}])| \leq p^{(1,1)} \lambda^\parallel \frac{e^{\frac{2\Tilde{\lambda}\frac{x_0}{1 + \theta}}{c^{(1)} + c^{(2)}}} - 1}{\Tilde{\lambda}}.
\end{equation}

Notice that the right-hand side of \eqref{eq:EQ7} has the same asymptotic behaviour as 
\begin{equation*}
    p^{(1,1)} \lambda^\parallel \frac{x_0}{1 + \theta}, \quad \textrm{when } p_1 + p_2 \to 0.
\end{equation*}
Further, if the sales of policies for different risks to the same individual are independent, then $p^{(1,1)} = p_1 p_2$ goes to zero faster than $\Tilde{\lambda} \E[\Tilde{Y}] = p_1 \lambda_1 \E[Y^{(1)}] + p_2 \lambda_2 \E[Y^{(2)}]$, when $p_1 + p_2 \to 0$. Thus, a small company selling policies for different risks independently is relatively immune to the effects of dependencies between the risks, contrary to a large company (it is obvious that a monopolistic company is fully exposed to the dependencies between risks). 
This immunity to risk's dependencies may persist even when sales of policies for different risks are not independent, provided the dependency in sales is  sufficiently mild. For example, $\lim_{p_1 + p_2 \to 0} \frac{p^{(1,1)}}{p_1 + p_2} = 0$ if the dependency between sales is modelled by a Clayton or a Frank copula in \eqref{eq:EQ8}. However, small companies are not specially protected from risk dependencies if the dependency between sales is modelled by a Pareto or a Gumbel copula.

\subsection{Optimal Loadings and Market Shares}

Since the right-hand sides of equalities \eqref{eq:EQ3} and \eqref{eq:EQ4} depend on the loadings through both $\frac{\E[\Tilde{N}_1}{c^{(1)} + c^{(2)}}$ and $F_{\Tilde{Y}}$, Proposition \ref{prop:alpha_proof} can not be generalized to models with multiple risks. However, it is possible to provide optimality conditions for the loadings $\theta = (\theta_1, \theta_2)$ minimizing the ruin probability.
\smallskip

To do this, we extend the notation introduced in the proof of Proposition \ref{prop:alpha_proof}. For any distribution function $G:[0, +\infty[ \mapsto [0,1]$, we consider the compounding operator of type \eqref{eq:IntegralOperator}
\begin{equation*}
    (\Psi_G h)(x) = \int_0^x \Big(  h(z) - \int_0^z h(z-y) d\theta(y)\Big)dz, \quad x \geq 0.
\end{equation*}
Thus, the 2-risk version of equation \eqref{eq:omega_prove_1} can be written as 
\begin{equation}
\label{eq:EQ10}
    V_\theta(x) = \frac{\Tilde{\lambda}_\theta}{c(\theta)} \bigg( \int_x^\infty 1- F_\theta(z)dz + \Big(\Psi_{F_\theta} V_\theta \Big)(x)  \bigg) ,
\end{equation}
where
\begin{equation*}
    F_\theta = \frac{\lambda_1^\perp}{\Tilde{\lambda}_\theta} p_1(\theta) F_{Y^{1\perp}} + \frac{\lambda_2^\perp}{\Tilde{\lambda}_\theta} p_1(\theta) F_{Y^{2\perp}} + \frac{\lambda^\parallel}{\Tilde{\lambda}_\theta} p^{(1,0)}(\theta) F_{Y^{1\parallel}} + \frac{\lambda^\parallel}{\Tilde{\lambda}_\theta} p^{(0,1)}(\theta) F_{Y^{2\parallel}} + \frac{\lambda^\parallel}{\Tilde{\lambda}_\theta} p^{(1,1)}(\theta) F_{Y^{1\parallel} + Y^{2\parallel}} .
\end{equation*}
Since $\frac{\lambda_1^\perp}{\Tilde{\lambda}_\theta} p_1(\theta) + \frac{\lambda_2^\perp}{\Tilde{\lambda}_\theta}p_2(\theta) + \frac{\lambda^\parallel}{\Tilde{\lambda}_\theta} p^{(1,0)}(\theta) + \frac{\lambda^\parallel}{\Tilde{\lambda}_\theta} p^{(0,1)}(\theta) + \frac{\lambda^\parallel}{\Tilde{\lambda}_\theta} p^{(1,1)}(\theta) = 1$, \eqref{eq:EQ10} becomes
\begin{equation*}
    \begin{split}
        V_{\theta}(x) &= \frac{\lambda_1^\perp p_1(\theta)}{c(\theta)} \bigg( \int_x^\infty 1- F_{Y^{1\perp}}(z)dz + (\Psi_{F_{Y^{1\perp}}}V_\theta)(x) \bigg) \\
        & \quad + \frac{\lambda_2^\perp p_2(\theta)}{c(\theta)} \bigg( \int_x^\infty 1- F_{Y^{2\perp}}(z)dz + (\Psi_{F_{Y^{2\perp}}}V_\theta)(x) \bigg) \\
        & \quad + \frac{\lambda^\parallel p^{(1,0)}(\theta)}{c(\theta)} \bigg( \int_x^\infty 1- F_{Y^{1\parallel}}(z)dz + (\Psi_{F_{Y^{1\parallel}}}V_\theta)(x) \bigg) \\
        & \quad + \frac{\lambda^\parallel p^{(0,1)}(\theta)}{c(\theta)} \bigg( \int_x^\infty 1- F_{Y^{2\parallel}}(z)dz + (\Psi_{F_{Y^{2\parallel}}}V_\theta)(x) \bigg) \\
        & \quad + \frac{\lambda^\parallel p^{(1,1)}(\theta)}{c(\theta)} \bigg( \int_x^\infty 1- F_{Y^{1\parallel} +Y^{2\parallel}}(z)dz + (\Psi_{F_{Y^{1\parallel} + Y^{2\parallel}}}V_\theta)(x) \bigg) .
    \end{split}
\end{equation*}

We write this in abbreviated form:
\begin{equation*}
    V_{\theta}(x) = <\alpha(\theta), \Gamma(x)> + (<\alpha(\theta), \Psi>V_\theta)(x) ,
\end{equation*}
where $\alpha(\theta)$ is the vector
\begin{equation*}
    \alpha(\theta) = \frac{1}{c(\theta)} \Big(\lambda_1^\perp p_1(\theta), \lambda_2^\perp p_2(\theta), \lambda^\parallel p^{(1,0)}, \lambda^\parallel p^{(0,1)}, \lambda^\parallel p^{(1,1)} \Big),
\end{equation*}
$\Gamma(x)$ is the vector function
\begin{equation*}
    \Gamma(x) = \Big(\int_x^\infty 1- F_{Y^{1\perp}}(z)dz, \int_x^\infty 1- F_{Y^{2\perp}}(z)dz, \int_x^\infty 1- F_{Y^{1\parallel}}(z)dz, \int_x^\infty 1- F_{Y^{2\parallel}}(z)dz, \int_x^\infty 1- F_{Y^{1\parallel} + Y^{2\parallel}}(z)dz \Big),
\end{equation*}
$\Psi$ is the vector of operators
\begin{equation*}
    \Psi = \Big(\Psi_{F_{Y^{1\perp}}}, \Psi_{F_{Y^{2\perp}}}, \Psi_{F_{Y^{1\parallel}}}, \Psi_{F_{Y^{2\parallel}}}, \Psi_{F_{Y^{1\parallel}} + Y^{2\parallel}} \Big) ,
\end{equation*}
and $<\cdot, \cdot>$ is the usual inner product in $\mathbbm{R}^5$. 
\smallskip

Using the argument in the proof of Proposition \ref{prop:alpha_proof}, we see  that $V_\theta$ admits the series representation
\begin{equation*}
    V_\theta(x) = \sum_{n=0}^\infty \Big(<\alpha(\theta), \Psi>^n <\alpha(\theta), \Gamma> \Big)(x) .
\end{equation*}
Similarly, any vector $\gamma \in \mathbbm{R}^5$ and any bounded measurable function $g:[0, +\infty[ \mapsto \mathbbm{R}$ define one unique function
\begin{equation*}
    u_{\gamma, g} (x) = \sum_{n = 0}^\infty \Big(<\gamma, \Psi>^n g\Big)(x) .
\end{equation*}
This function is analytic with respect to $\gamma$, with partial derivatives
\begin{equation*}
    \frac{\partial{u_{\gamma, g}}}{\partial{\gamma_i}} = \sum_{n = 0}^\infty <\gamma, \Psi>^n\big( \Psi_i u_{\gamma,g}\big) = u_{\gamma, \Psi_i u_{\gamma, g}}, \quad i = 1, \dots, 5 .
\end{equation*}

Taking into account the chain rule for derivatives, this proves the following proposition.

\begin{proposition}
\label{prop:P2}
If $\theta \mapsto \alpha(\theta)$ is differentiable, then $\theta \mapsto V_\theta(x)$ is differentiable for every $x \geq 0$ and

$$\frac{\partial{}}{\partial{\theta_i}} V_\theta(x) = \sum_{j = 1}^5 u_{\alpha(\theta),(\Gamma_j + \Psi_j V_\theta)} \frac{\partial{\alpha_j(\theta)}}{\partial{\theta_i}}, \quad i = 1,2.$$
\end{proposition}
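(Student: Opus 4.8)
The plan is to exploit the series representation of $V_\theta$ together with the already-established analyticity of the auxiliary functions $u_{\gamma,g}$, reducing the proof to an application of the chain rule. The one genuine subtlety is that $\theta$ enters $V_\theta$ in two distinct ways — explicitly through the coefficients $\alpha_j(\theta)$ and implicitly through the operator-argument $\gamma = \alpha(\theta)$ of $u_{\gamma,g}$ — so the first step is to disentangle these two dependencies.

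First I would observe that, since $<\alpha(\theta),\Gamma> = \sum_{j=1}^5 \alpha_j(\theta)\Gamma_j$ and the map $g \mapsto u_{\gamma,g}$ is linear (each operator $<\gamma,\Psi>^n$ being linear), the series representation gives
\begin{equation*}
    V_\theta = u_{\alpha(\theta),\,<\alpha(\theta),\Gamma>} = \sum_{j=1}^5 \alpha_j(\theta)\, u_{\alpha(\theta),\Gamma_j}.
\end{equation*}
This isolates the explicit prefactor $\alpha_j(\theta)$ from the dependence carried inside $u_{\alpha(\theta),\Gamma_j}$ through its first subscript; crucially, each $u_{\alpha(\theta),\Gamma_j}$ now has a fixed second argument $\Gamma_j$ and depends on $\theta$ only via $\gamma=\alpha(\theta)$.

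Next I would differentiate this sum with respect to $\theta_i$ using the product rule. Differentiability of $\theta \mapsto V_\theta(x)$ follows from the hypothesis that $\theta \mapsto \alpha(\theta)$ is differentiable combined with the analyticity of $\gamma \mapsto u_{\gamma,g}$ already recorded above. The product rule yields two batches of terms. The first batch, from differentiating the prefactors, is simply $\sum_{j} \frac{\partial \alpha_j(\theta)}{\partial \theta_i}\, u_{\alpha(\theta),\Gamma_j}$. For the second batch, from differentiating $u_{\alpha(\theta),\Gamma_j}$, I would apply the chain rule together with the formula $\frac{\partial u_{\gamma,g}}{\partial \gamma_k} = u_{\gamma,\Psi_k u_{\gamma,g}}$, obtaining
\begin{equation*}
    \sum_{j=1}^5 \alpha_j(\theta) \sum_{k=1}^5 u_{\alpha(\theta),\,\Psi_k u_{\alpha(\theta),\Gamma_j}} \frac{\partial \alpha_k(\theta)}{\partial \theta_i}.
\end{equation*}

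The final step is the recombination. Swapping the order of summation over $j$ and $k$ and using linearity of $\Psi_k$ and of $u_{\gamma,\cdot}$ in its second argument, the inner sum collapses back onto $V_\theta$:
\begin{equation*}
    \sum_{j=1}^5 \alpha_j(\theta)\, u_{\alpha(\theta),\,\Psi_k u_{\alpha(\theta),\Gamma_j}} = u_{\alpha(\theta),\,\Psi_k\left(\sum_j \alpha_j(\theta) u_{\alpha(\theta),\Gamma_j}\right)} = u_{\alpha(\theta),\,\Psi_k V_\theta},
\end{equation*}
where the last equality uses the representation of $V_\theta$ from the first step. Relabelling $k$ as $j$ and adding the two batches, linearity in the second argument gives $u_{\alpha(\theta),\Gamma_j} + u_{\alpha(\theta),\Psi_j V_\theta} = u_{\alpha(\theta),(\Gamma_j + \Psi_j V_\theta)}$, which is exactly the claimed formula. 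I expect this recombination — recognizing that the double sum folds back into $V_\theta$ via linearity — to be the only place requiring real care; the differentiability itself is essentially handed to us by the stated analyticity of $u_{\gamma,g}$, and the remaining manipulations are bookkeeping.
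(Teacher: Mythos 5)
Your proposal is correct and follows essentially the same route as the paper: the series representation of $V_\theta$, the derivative formula $\frac{\partial u_{\gamma,g}}{\partial \gamma_k} = u_{\gamma,\Psi_k u_{\gamma,g}}$, and the chain rule. Your explicit decomposition $V_\theta = \sum_j \alpha_j(\theta)\, u_{\alpha(\theta),\Gamma_j}$ followed by the product rule and the recombination of the double sum is precisely the bookkeeping that the paper compresses into the single phrase ``taking into account the chain rule,'' so it fills in the intended details rather than taking a different path.
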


By Proposition \ref{prop:P2}, the optimal loadings satisfy the equation 
\begin{equation}
    \label{eq:EQ11}
     \sum_{j = 1}^5 u_{\alpha(\theta),(\Gamma_j + \Psi_j V_\theta)} \frac{\partial{\alpha_j(\theta)}}{\partial{\theta_i}} = 0, \quad i = 1,2
\end{equation}

Contrary to the single-risk case, the odds of finding explicit solutions for this equation seem very low, even in simple cases. 
However, \eqref{eq:EQ11} can be numerically solved by Newton's algorithm, the second-order partial derivatives being
\begin{equation*}
    \frac{\partial^2}{\partial{\theta_i}\partial{\theta_j}} V_\theta(x) = \sum_{k = 1}^5 u_{\alpha(\theta),(\Gamma_k + \Psi_k V_\theta)} \frac{\partial^2{\alpha_k(\theta)}}{\partial{\theta_i}\partial{\theta_j}} + \sum_{k = 1}^5\sum_{l = 1}^5 u_{\alpha(\theta),\Psi_k} u_{\alpha(\theta),(\Gamma_l + \Psi_l V_\theta)} \frac{\partial{\alpha_k(\theta)}}{\partial{\theta_i}}  \frac{\partial{\alpha_l(\theta)}}{\partial{\theta_j}} .
\end{equation*}

Notice that the expected profit is
\begin{equation*}
    c^{(1)}(\theta) + c^{(2)}(\theta) - \Tilde{\lambda} \E[\Tilde{Y}] = \theta_1 p_1(\theta_1)\lambda_1\E[Y^{(1)}] + \theta_2 p_2(\theta_2)\lambda_2\E[Y^{(2)}].
\end{equation*}
Thus, it depends only on the marginal distribution of the claim processes $S^{(1)}$, $S^{(2)}$, being independent of the dependency structure. It follows that the loadings minimizing the joint profit coincide with the loadings minimizing the profit on each risk, separately.  That is, a pricing strategy that completely focus on expected profit completely fails to take both dependencies between risks and dependencies between sales of policies into account.


\section{Numerical Results}
\label{seq:numerical}

Throughout this section, $Y_i^{(i)}$ are assumed to be i.i.d gamma distributed random variables with shape parameter, $a^{(i)}$, and scale parameters, $k^{(i)}$, which means that the mean is, $\E[Y^{(i)}] = a^{(i)}k^{(i)}$, for $i = 1,2$. In the following numerical analysis let $a^{(1)} = a^{(2)} = 2$, $k^{(1)} = k^{(2)} = 500$, $\lambda^{(1)} = \lambda^{(2)} = 800$, $\beta_0^{(1)} = \beta_0^{(2)} = -0.5$, $\beta_1^{(2)} = 4$ and $\beta_1^{(1)} = 4.5$. That is, the difference stems from surplus process 2 being more sensitive to the loading via the parameter $\beta_1^{(2)}$.  $r^{(i)}$ is taken to be $20\%$ of the pure premium if the exposure was $40\%$, that is $r^{(i)} = 0.4 * 0.2 k^{(i)} a^{(i)} N^{(i)}$. The operational cost is therefore $8\%$ of the expected total amount of claims in the market. The Clayton Lévy copula is considered for positive dependence and the parameter is set to $\omega = 1$. Finally, let $\theta_{ruin}^*$ and $\theta_{profit}^*$ denote the optimal loading when the ruin probability and expected profit criterion is used, respectively. The programming language R was used for every calculation. \\

\subsection{Single Surplus Process}

The surplus processes are first considered separately. The ruin probability and the expected profit is plotted as a function of $\theta$ for the two processes in Figures \ref{fig:logit_1} and \ref{fig:logit_2}. $\theta_{ruin}^*$ was found by minimizing $\alpha$.

\begin{figure}[H]
  \centering
  \includegraphics[scale = 0.4]{./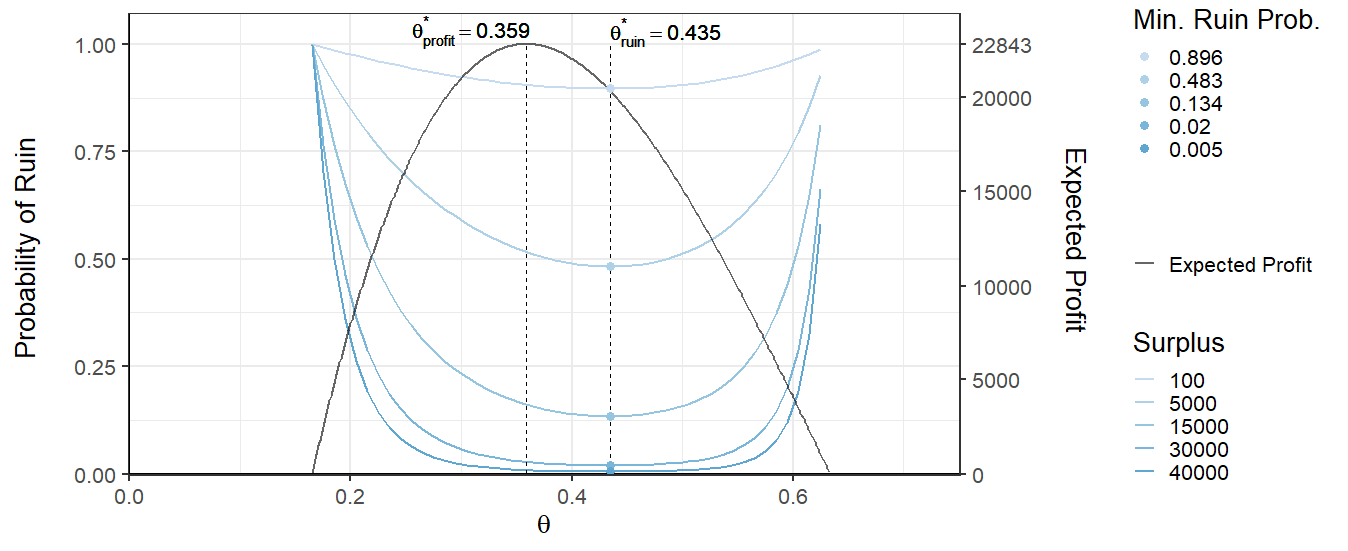}
  \caption[Optimal loading parameter of surplus process 1.]{Surplus process 1. The blue lines show the ruin probability as a function of $\theta$ for a given surplus x. The black line shows the expected profit per time unit as a function of $\theta$. The blue dots show the minimum ruin probability for each surplus. $\theta_{profit}^*$ and $\theta_{ruin}^*$ denote the optimal security loading parameter for the expected profit and for the probability of ruin, respectively. }
  \label{fig:logit_1}
\end{figure}

From Figure \ref{fig:logit_1} it can be seen that the optimal security loading parameter for the ruin probability is, $\theta_{ruin}^* = 0.435$, while the $\theta$ that maximizes the expected profit is lower, $\theta_{profit}^* = 0.359$. Moreover, in this example, the maximum expected profit is 22.843 units and is given at $\theta_{profit}^*$. The expected profit taken at the point $\theta_{ruin}^*$ is lower, close to 20.000 units. 

\begin{figure}[H]
  \centering
  \includegraphics[scale = 0.4]{./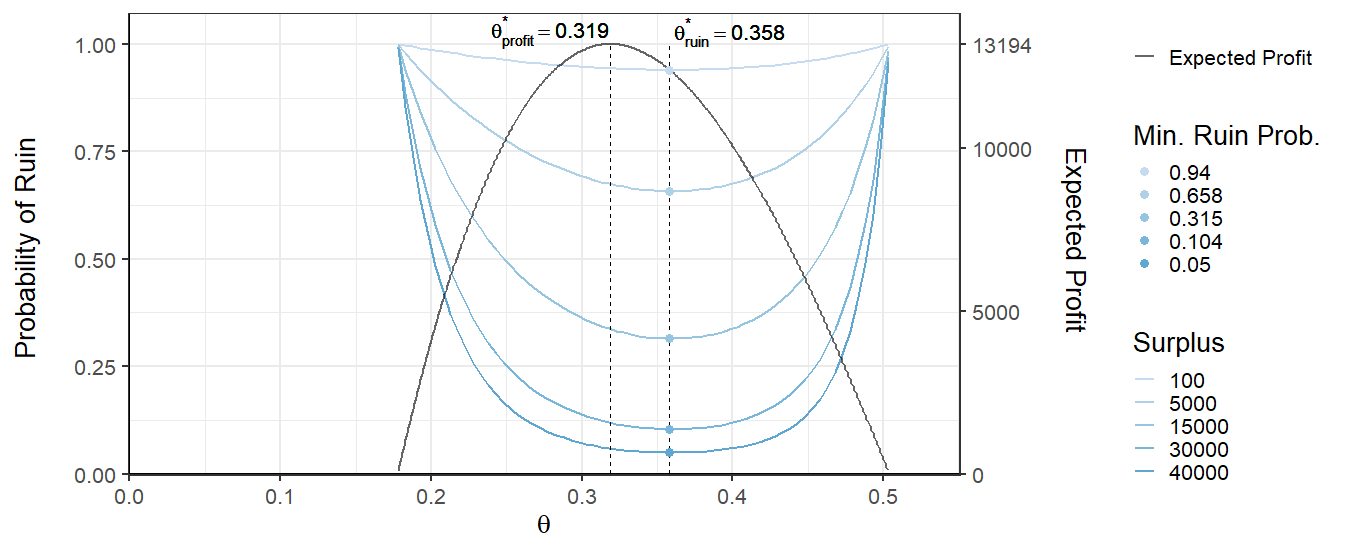}
  \caption[Optimal loading parameter of surplus process 2.]{Surplus process 2. The blue lines show the ruin probability as a function of $\theta$ for a given surplus x. The black line shows the expected profit per time unit as a function of $\theta$.  The blue dots show the minimum ruin probability for each surplus. $\theta_{profit}^*$ and $\theta_{ruin}^*$ denote the optimal security loading parameter for the expected profit and for the probability of ruin, respectively.}
  \label{fig:logit_2}
\end{figure}

From Figure \ref{fig:logit_2} it can be seen that the optimal security loading parameter for the ruin probability is $\theta_{ruin}^* = 0.358$, while the $\theta$ that maximizes the expected profit is again lower or $\theta_{profit}^*= 0.319$.  \\

Obviously, for both processes, the ruin probability decreases with increasing surplus. Moreover, it can be seen that surplus process $X_2$ has higher probability of ruin than surplus process $X_1$ for the same amount of surplus. The sensitivity of the demand curve affects the ruin probability and $\theta_{ruin}^*$ greatly. The more sensitive to the exposure the demand curve is, the closer the $\theta_{profit}^*$ and $\theta_{ruin}^*$ are. This more sensitive curve also has higher probability of ruin for a given surplus, which indicates that more competitive insurance products are riskier. These effects can be seen if the two Figures (\ref{fig:logit_1} and \ref{fig:logit_2}) are compared. Conversely, if the demand curve is not sensitive to the price, then the gap between $\theta_{profit}^*$ and $\theta_{ruin}^*$ can become quite large. Additionally, it can be seen from the curve at surplus = 100 that the ruin probability for $\theta_{profit}^*$ and $\theta_{ruin}^*$ are similar but as the surplus grows the values start to differ and once the surplus is great enough the two values $\theta_{profit}^*$ and $\theta_{ruin}^*$ result in similar ruin probabilities again. This means that if the insurance firm has high enough surplus then they can choose arbitrary $\theta$ without risking the chance of ruin. If the surplus is great enough then the value of $\theta$ does not matter as much. However, having too much reserves can be bad for insurance companies as it can be seen as a negative leverage. The bowl shape of the blue curves in the two Figures (\ref{fig:logit_1} and \ref{fig:logit_2})  is because of the interplay between the fixed cost and the demand curve. \\

$\theta_{ruin}^*$ should give the minimum ruin probability at all surplus values. This can be tested by graphing multiple ruin probability curves and compare it with the one obtained by $\theta_{ruin}^*$. Figure \ref{fig:test_many_thetas} shows that $\theta_{ruin}^*$ gives the minimum ruin probability indeed.

\begin{figure}[H]
  \centering
  \includegraphics[scale = 0.4]{./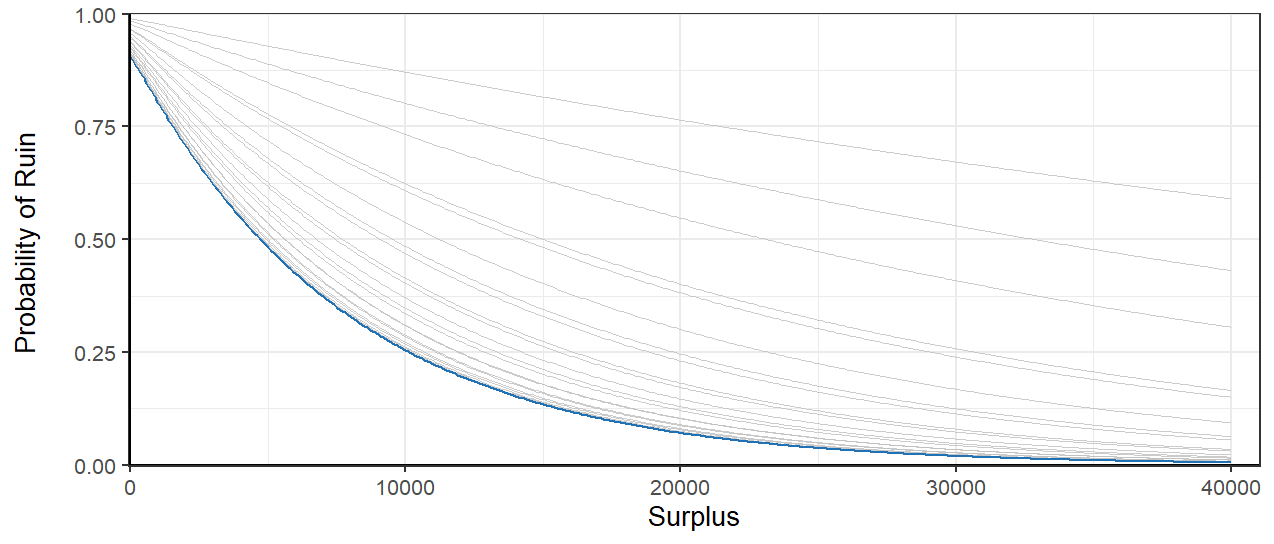}
  \caption[Optimal value function of surplus process, $X_1$.]{The figure compares the optimal value function of surplus process, $X_1$ (blue line) to other cost functions (grey lines). The blue line is achieved by setting $\theta = \theta_{ruin}^*$.  All the cost functions lie above the optimal value function, as is expected. }
  \label{fig:test_many_thetas}
\end{figure}

\subsection{Two Aggregated Surplus Processes with Common Loading}

Next, the two surplus processes, $X_1$ and $X_2$ are aggregated, both when the claims are independent and dependent. The acquisition is independent in this subsection.

\begin{figure}[H]
  \centering
  \includegraphics[scale = 0.38]{./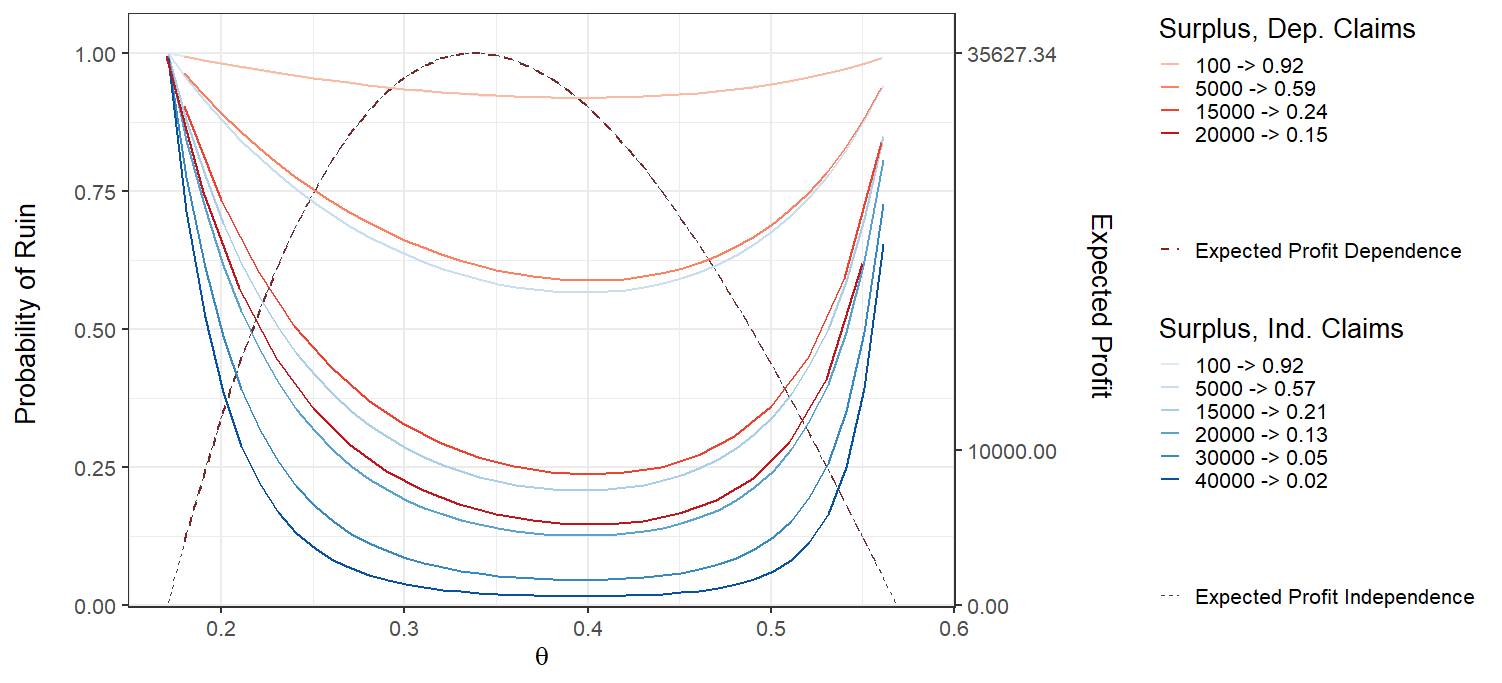}
  \caption[Surplus processes 1 and 2 aggregated. Showing the ruin probability and optimal $\theta$ both in the case of independence and dependence. ]{Ruin probability when $X_1$ and $X_2$ are aggregated as a function of the security loading parameter, $\theta$, both when they are independent and dependent via Clayton Lévy copula with $\omega = 0.5$. The blue curves show the ruin probability when the two processes are independent for different values of the surplus and the red curves show the same for the dependent case. The curves have similar shapes, but the ruin probability is higher in the case of dependence, for the same surplus. The values in the legend show the minimum ruin probability for a given surplus (surplus $\rightarrow$ probability).}
  \label{fig:logit_dep_vs_indp}

\end{figure}

Figure \ref{fig:logit_dep_vs_indp} shows the ruin probability of the aggregated surplus process as a function of the security loading parameter, $\theta$, both when they are independent and dependent via Clayton Lévy copula. The red curves represent dependence while the blue curves represent independence. \\

Firstly, it can be seen that the expected profit is the same for dependence and independence and from the figure, $\theta^*_{profit} \approx 0.34$. The reason is that the claim mean and the claim frequency is almost the same (numerically) for dependence and independence. \\

Secondly, the dependent case has a higher probability of ruin than the independent case for the same amount of surplus. However, the ruin probability is almost the same for small surplus values as can be seen from the figure. Interestingly, the optimal loading for dependence and independence seem to be the same and numerically the values are $\theta^*_{ruin,dep}  = 0.4  = \theta^*_{ruin,indp}$.  The surplus value does not change the optimal loading $\theta^*$, as expected. The reason why the ruin probability difference between the dependent and independent cases is relatively small is because of the probability $p^{(0,1)}(\theta)$. The fact that the insurance company does not always have the both claims $Y^{1\parallel}$ and $Y^{1\parallel}$ when a common jump occurs reduces the risk.
\\

Finally, the difference of the two ruin probability curves (red and blue) for a given surplus seems to be increasing with increasing surplus, meaning that the ruin probability in the independent case decreases more rapidly with increasing surplus then for the dependent case. Therefore, it is clear that the positive dependent case is riskier. \\\

Note that   $\theta_{ruin}^* \approx 0.4$, which is very close to the weighted average of the optimal loading parameter of the isolated surplus processes where the weight is the exposure ratio of each surplus process, that is

\begin{equation*}
    \theta_{weighted} = \frac{0.435 \frac{1}{1 + \exp(-0.6 + 4*0.4)} + 0.358\frac{1}{1 + \exp(-0.6 + 4.5*0.4)}}{\frac{1}{1 + \exp(-0.6 + 4*0.4)} + \frac{1}{1 + \exp(-0.6 + 4.5*0.4)}} \approx 0.4,
\end{equation*}

which strongly indicates that the optimal value, $\theta_{ruin}^*$, is simply the weighted average. \\

\subsubsection{Two Aggregated Surplus Processes with Separate Loadings}

It is more realistic to consider $\theta$ as a vector so that the loading parameter can be different for each surplus process separately, to spread the total premium over the policies in an optimal way. The two surplus processes, $X_1$ and $X_2$, are aggregated as before and the constants are the same, but let $\theta = (\theta^{(1)}, \theta^{(2)})$. \\

\begin{figure}[ht]
  \begin{subfigure}{.5\textwidth}
  \centering
  \includegraphics[width=.9\linewidth]{./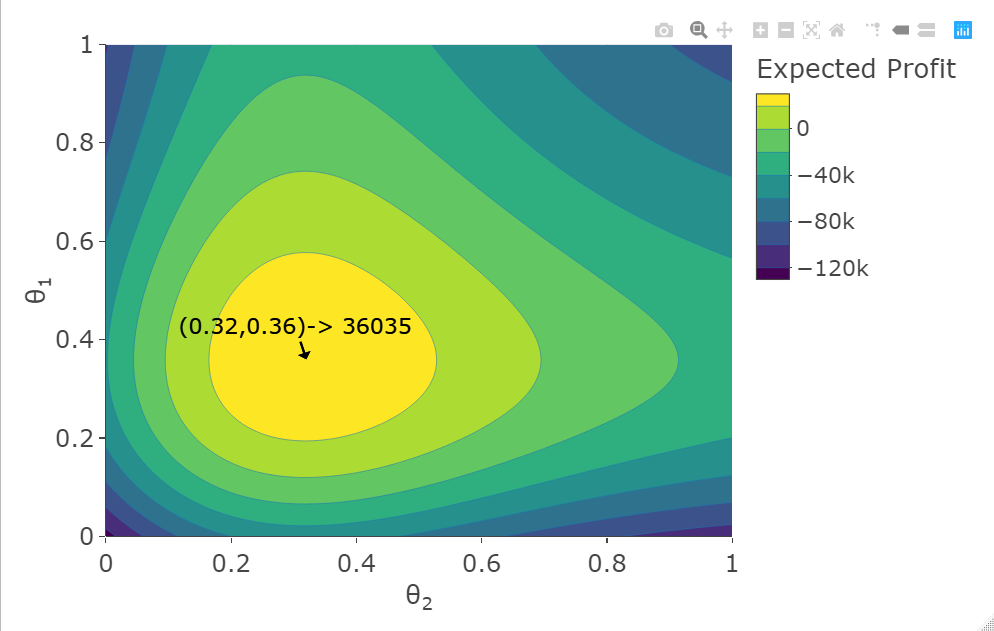}  
\end{subfigure}
\begin{subfigure}{.5\textwidth}
  \centering
  \includegraphics[width=.9\linewidth]{./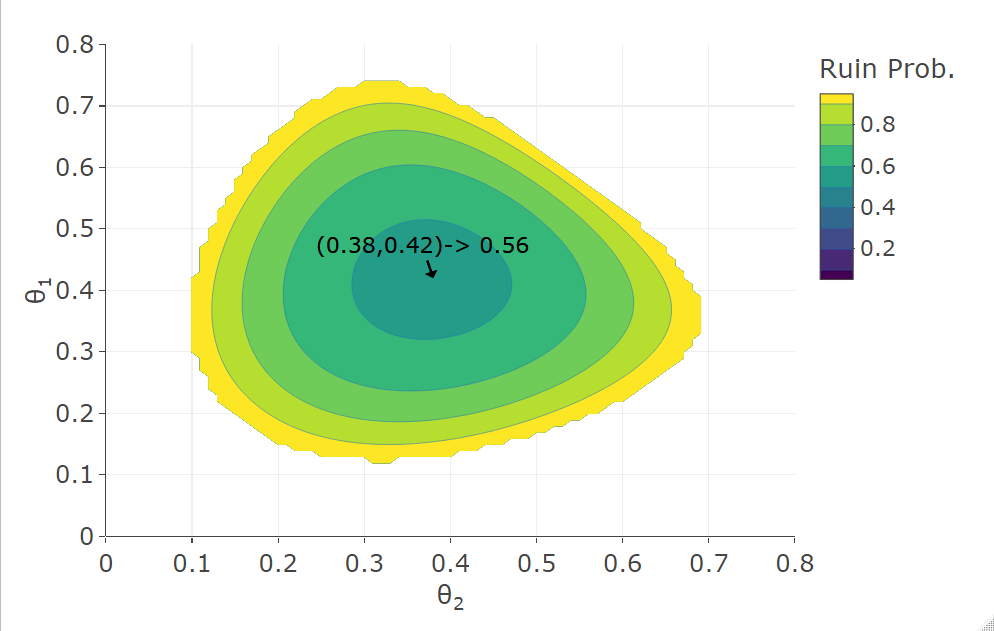}  
\end{subfigure}
  \caption[Two security loading parameters of two aggregated independent surplus processes. ]{Expected profit (left) and the ruin probability (right) when $X_1$ and $X_2$ are aggregated, as a function of the security loading parameters, $\theta^{(1)}$ and $\theta^{(2)}$. The processes are assumed to be independent and the surplus is fixed at $x = 5000$. The parenthesis in the right figure shows the optimal values of $\theta^{(1)}$ and $\theta^{(2)}$ with the ruin probability as a criterion. The arrow indicates which values $\theta^{(1)}$ and $\theta^{(2)}$ are mapped into, thus showing the minimum ruin probability. The parenthesis in the left figure shows the same for the expected profit. The shape of the contour plot is due to the fact that the $\theta$ grid considered is sparser for values that give high ruin probability.}
  \label{fig:both_two_indp_15000}
\end{figure}

Figure \ref{fig:both_two_indp_15000} shows the expected profit (left) and the ruin probability (right), when $X_1$ and $X_2$ are assumed to be independent and aggregated, as a function of the security loading parameters, $\theta^{(1)}$ and $\theta^{(2)}$. The surplus is fixed at $x = 5000$ and the optimal values are shown. It should be noted that many surplus values were tested and they all gave the same value for $\theta_{ruin}^{*(1)}$, $\theta_{ruin}^{*(2)}$, $\theta_{profit}^{*(2)}$, and $\theta_{profit}^{*(2)}$ as shown, only the ruin probability level changed. Note that the optimal loading parameters for the expected profit are the same as those for the individual surplus processes. However, the optimal loading parameters for the ruin probability change when compared to the individual one (compare it with Figures \ref{fig:logit_1} and \ref{fig:logit_2}). When compared to the optimal loading parameter for the individual surplus process, $\theta^{(1)}$ decreases from $0.435$ to $0.42$ and $\theta^{(2)}$ increases from 0.358 to 0.38. Therefore, the optimal security loading parameter decision is to decrease the loading parameter of the less sensitive surplus process while increasing the loading parameter of the more sensitive surplus process. Additionally, when compared to Figure \ref{fig:logit_dep_vs_indp}, the minimum ruin probability for one shared loading is $0.57$ while the ruin for two loadings is $0.56$, showing only a marginal difference. When the same is done for other surplus values a similar difference is found. The expected profit is marginally higher. \\

Lastly, consider the case when the surplus processes are assumed to be dependent via Lévy Clayton copula and the loadings can be different for each surplus process separately. Figure \ref{fig:both_two_dep_15000} shows the ruin probability when $X_1$ and $X_2$ are aggregated as a function of the security loading parameters, $\theta^{(1)}$ and $\theta^{(2)}$. The shape of the contour plots is due to the fact that the $\theta$ grid considered is sparser for values that give high ruin probability. The surplus is fixed at $x = 5000$. \\

\begin{figure}[ht]
  \begin{subfigure}{.5\textwidth}
  \centering
  \includegraphics[width=.96\linewidth]{./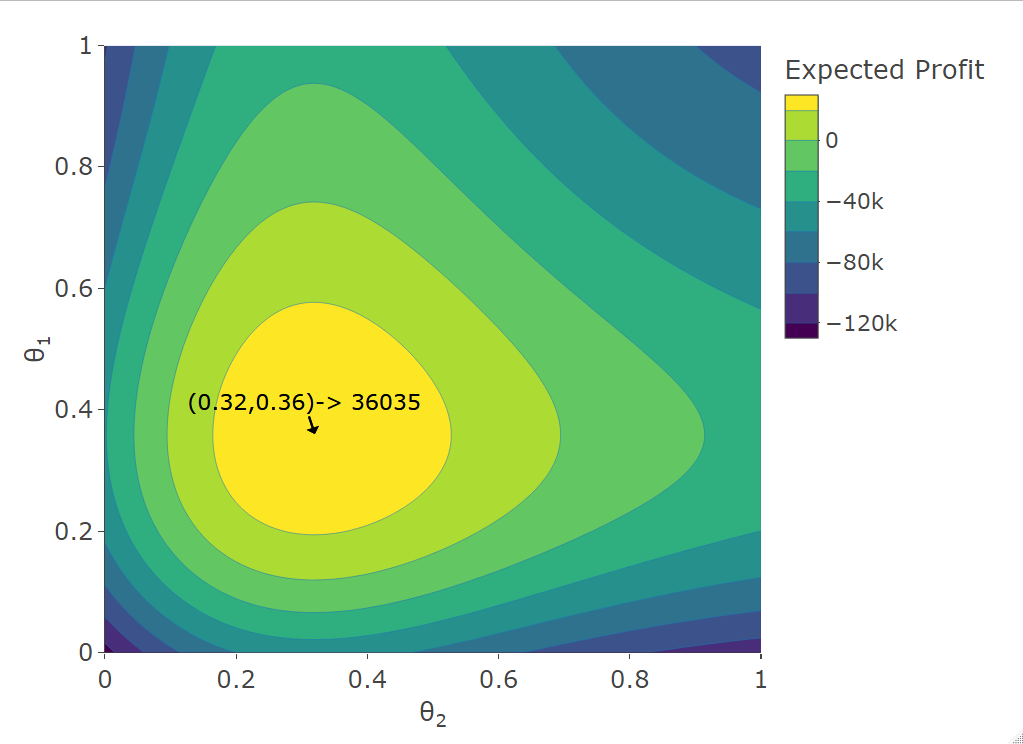}  
\end{subfigure}
\begin{subfigure}{.5\textwidth}
  \centering
  \includegraphics[width=.96\linewidth]{./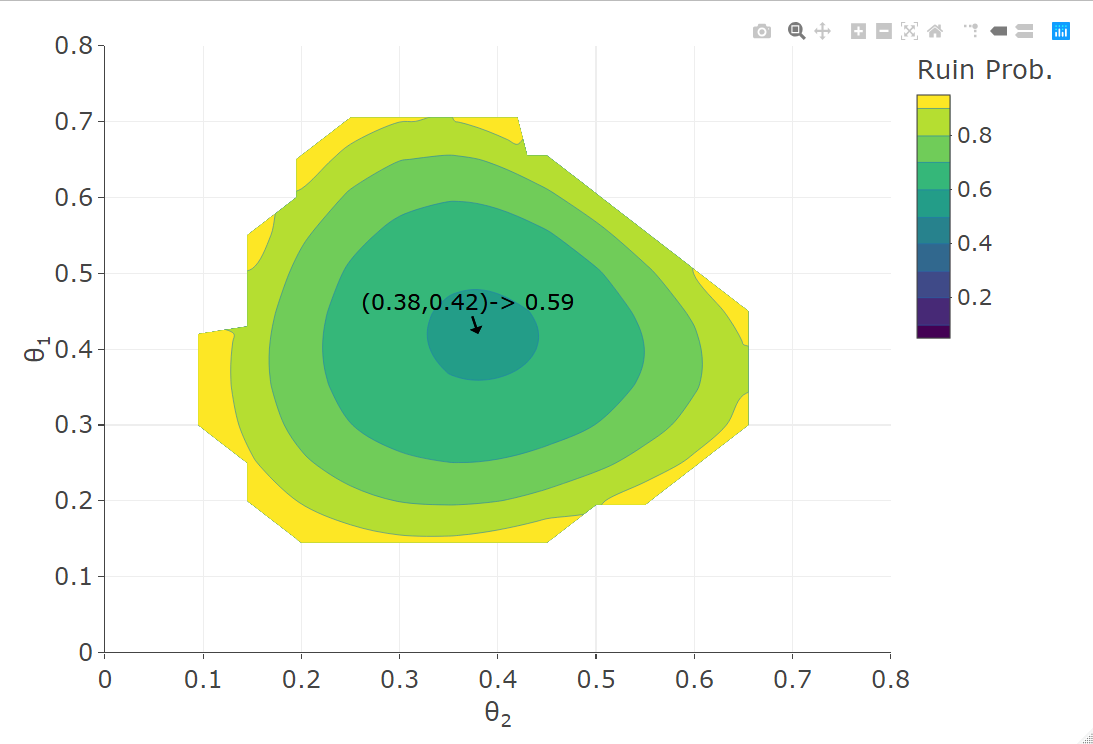}  
\end{subfigure}
  \caption[Two security loading parameters of two aggregated dependent surplus processes.]{Expected profit (left) and the ruin probability (right) when $X_1$ and $X_2$ are aggregated, as a function of the security loading parameters, $\theta^{(1)}$ and $\theta^{(2)}$. The processes are assumed to be dependent via Clayton Lévy copula and the surplus is fixed at $x = 5000$. The parenthesis in the right figure shows the optimal values of $\theta^{(1)}$ and $\theta^{(2)}$ with the ruin probability as a criterion. The arrow shows which values $\theta^{(1)}$ and $\theta^{(2)}$ are mapped into, thus showing the minimum ruin probability. The parenthesis in the left figure shows the same except for the expected profit. The shape of the contour plot is due to the fact that the $\theta$ grid considered is sparser for values that give high ruin probability.}
  \label{fig:both_two_dep_15000}
\end{figure}

It can be seen that the optimal loadings $\theta^{(1)}$ and $\theta^{(2)}$ are the same as the ones in the case of independence and the minimum ruin probability is higher (compared to the case in Figure \ref{fig:both_two_indp_15000}). Both the values and the optimal loadings of the expected profit are the same as the independent case. 
Again, the optimal security loading parameter decision is to decrease the loading parameter of the less sensitive surplus process while increasing the loading parameter of the more sensitive surplus process. The difference between the ruin probability in Figure \ref{fig:both_two_dep_15000} vs Figure \ref{fig:both_two_indp_15000} is only $0.03$ but in this case the surplus is low compared to the expected profit. If the surplus would be increased to $\approx 20.000$ the difference would become greater. The difference would then decrease again if the surplus were increased to $\approx 40.000$. \\

Additionally, when compared to Figure \ref{fig:logit_dep_vs_indp}, the minimum ruin probability for one common loading is $0.59$, which is the same as the ruin probability for separate loading selections, therefore the difference is only marginal. \\

\subsection{Dependent Claims and Dependent Acquisition}

It is time to look at the case when we have dependent claims and dependent acquisition. Note that the case when we have independent claims and dependent acquisition is the same as the total independence case. We will look both at the case when the acquisition is modelled with a Gumbel and Clayton dependency structure. To compare these two structures we use Kendell's tau. The following equations relate the copula parameters, $\omega_{clayton}$ and $\omega_{gumbel}$ to kendell's tau, $\tau$. 

\begin{equation*}
  \omega_{clayton} = \frac{2 \tau}{1-\tau}, \quad   \omega_{gumbel} = \frac{1}{1-\tau}.
\end{equation*}

We know that the expected profit is the same as before for all values of $\tau$. Therefore, we analyze the ruin probability. \\

\begin{figure}[ht]
  \begin{subfigure}{.5\textwidth}
  \centering
  \includegraphics[width=.9\linewidth]{./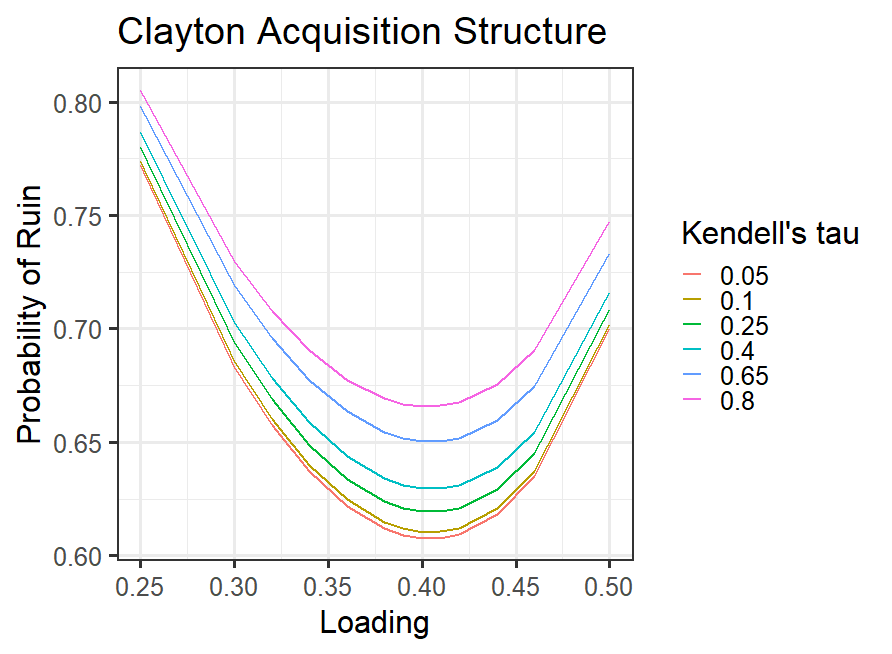}  
\end{subfigure}
\begin{subfigure}{.5\textwidth}
  \centering
  \includegraphics[width=.9\linewidth]{./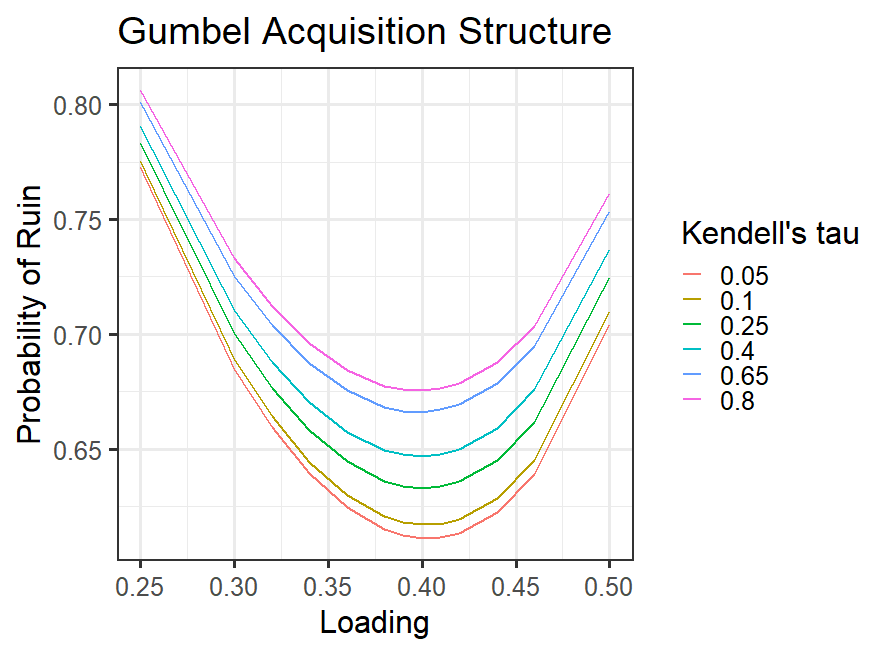}  
\end{subfigure}
  \caption[Acquisition dependency.]{The ruin probability when the acquisition is modelled with a Clayton (left) and a Gumbel (right) dependency structure. The surplus is constant. In both cases, the ruin probability is higher for higher kendell's tau. The Gumbel case is more riskier. The surplus is fixed at 5000 units.}
  \label{fig:acquisition}
\end{figure}

In Figure \ref{fig:acquisition} we can see the ruin probability for different dependency values when the surplus is fixed at 5000 units. We can see that the ruin probability is higher for more dependent acquisition, as we expected. Also, we can see that the Gumbel acquisition model gives higher ruin probabilities than the Clayton model for the same Kendell's tau value. They give the same optimal loading parameter. It can also be seen that when kendell's tau is $0.05$ (close to $0$) the ruin probability is close to the case of independent acquisition, as expected. The optimal loading parameter is the same for all dependency levels.


\appendix

\section{Numerical scheme for equation \ref{eq:Ruin_lundberg_inf_survival}, using linear approximation}
\label{appendix:NA_ez_Poi_Exp_inf}
Consider the process 

$$ X_t = u + ct - \sum_{i=0}^{N_t} Y_i$$ 

where $Y_i$ are iid continuous random variables and $N_t$ is a $Poisson(\lambda t)$. To approximate equation \ref{eq:Ruin_lundberg_inf_survival}, take a grid of points  $\epsilon=x_0<x_1<...x_n$, $x_i \in \mathbbm{R}, \forall i \in \mathbbm{N}, \epsilon >0$, with equal interval lengths, $h = x_{i}- x_{i-1}$. A linear approximation is used to approximate $\Bar{V}(x)$

\[
\Bar{V}(z) =
 \Bar{V}(x_{j-1}) + \frac{\Bar{V}(x_j)- \Bar{V}(x_{j-1})}{h}(z-x_{j-1}), \quad z \in [x_{j-1},x_j], j\leq i 
\]

where $ \frac{\Bar{V}(x_j)- \Bar{V}(x_{j-1})}{h}$ is an approximation of the derivative, $\Bar{V}'(x_{j-1})$, using the so-called forward difference

Let $\Bar{V}_i$ denote the approximation of $\Bar{V}(x_i)$. Let $\Bar{S}(x) = \int_0^x \Bar{F}(y)dy$ and $\Bar{\Bar{S}}(x) = \int_0^x \Bar{S}(y)dy$. For each $x_i, i>0$ solve the following equation

\begin{equation*}
        \Bar{V}(x_i)-\Bar{V}(0^+) = \frac{\lambda}{c}\int_0^x\Bar{V}(x_i-y)\Bar{F}(y)dy.
\end{equation*} 

The goal is to develop a recursive method from $x_0$ as the value of $\Bar{V}_0$ is known. \\

$\boldsymbol{if \quad i =0}$ \\

Set $\Bar{V}_0 = 1-\frac{\lambda}{c}\E[Y]$ \\

$\boldsymbol{if \quad i =1}$ \\

Calculate
\begin{equation*}
    \begin{split}
    \Bar{V}_1 &= \Bar{V}_0 + \frac{\lambda}{c}\int_{o}^{x_i} ((\Bar{V}_0 + \frac{\Bar{V}_1- \Bar{V}_0}{h})(x_1-y-x_0))\Bar{F}(y)dy \\
    &= \Bar{V}_0 + \frac{\lambda}{c} \Big(\Bar{V}_0(\Bar{S}(x_1)-\Bar{S}(x_0)) + \\
    & \qquad \frac{\Bar{V}_1-\Bar{V}_0}{h}\big([x_1-y-x_0]_{x_0}^{x_1}+ \Bar{\Bar{S}}(x_1)-\Bar{\Bar{S}}(x_0) \big) \Big)\\
    &= \Bar{V}_0 + \frac{\lambda}{c}( a_{1,1} + \frac{\Bar{V}_1-\Bar{V}_0}{h}a_{2,1})
    \end{split}
\end{equation*}
$$\Leftrightarrow$$
$$(1-\frac{\lambda a_{2,1}}{ch})\Bar{V}_1 =\Bar{V}_0 + \frac{\lambda}{c}\Bar{V}_0(a_{1,1}-\frac{a_{2,1}}{h})  $$

$\boldsymbol{if \quad i >1}$ \\

Calculate

\begin{equation*}
    \begin{split}
    \Bar{V}(x_i)-\Bar{V}(0) &= \frac{\lambda}{c}\Big(\sum_{j=2}^i\int_{x_{j-1}}^{x_j}\Bar{V}(x_i-y)S(y)dy + \int_{x_{i-1}}^{x_i}\Bar{V}(x_i-y)S(y)dy \Big) \\
    &= \frac{\lambda}{c}\bigg( \sum_{j=2}^i\Big(\Bar{V}_{i-j}(\Bar{S}(x_j)-\Bar{S}(x_{j-1})) + \frac{\Bar{V}_{i-j+1}-\Bar{V}_{i-j}}{h}\big([x_i-y-x_{i-j}]_{x_{j-1}}^{x_j}+ \Bar{\Bar{S}}(x_j)-\Bar{\Bar{S}}(x_{j-1}) \big) \Big) +\\
    & \qquad \Bar{V}_{i-j}(\Bar{S}(x_i)-\Bar{S}(x_{i-1})) + \frac{\Bar{V}_i-\Bar{V}_{i-j}}{h}\big([x_i-y-x_{i-j}]_{x_{i-1}}^{x_i}+ \Bar{\Bar{S}}(x_i)-\Bar{\Bar{S}}(x_{i-1}) \big)\bigg) \\
    &= \frac{\lambda}{c}\bigg( \sum_{j=2}^i\Big(\Bar{V}_{i-j}(\Bar{S}(x_j)-\Bar{S}(x_{j-1})) + \frac{\Bar{V}_{i-j+1}-\Bar{V}_{i-j}}{h}\big([x_i-y-x_{i-j}]_{x_{j-1}}^{x_j}+ \Bar{\Bar{S}}(x_j)-\Bar{\Bar{S}}(x_{j-1}) \big) \Big) +\\
    & \qquad \Bar{V}_{i-j}a_{1,i} + \frac{\Bar{V}_i-\Bar{V}_{i-j}}{h}a_{2,i}\bigg)
    \end{split}
\end{equation*}
$$\Leftrightarrow$$
\begin{equation*}
    \begin{split}
    (1-\frac{\lambda a_{2,i}}{ch})\Bar{V}_i & =\Bar{V}_0 + \\
    &  \qquad \frac{\lambda}{c}\bigg( \sum_{j=2}^i\Big(\Bar{V}_{i-j}(\Bar{S}(x_j)-\Bar{S}(x_{j-1})) + \frac{\Bar{V}_{i-j+1}-\Bar{V}_{i-j}}{h}\big([x_i-y-x_{i-j}]_{x_{j-1}}^{x_j}+ \Bar{\Bar{S}}(x_j)-\Bar{\Bar{S}}(x_{j-1}) \big) \Big) + \\
    & \qquad \Bar{V}_{i-j}a_{1,i} - \frac{\Bar{V}_{i-j}}{h}a_{2,i}\bigg)
    \end{split}
\end{equation*}

\begin{algorithm}[H]
\label{algo:infnite_survival}
\SetAlgoLined
Let the symbol $\Vec{}$ denote a vector. \\
\vspace{2mm}
\textbf{Initialize} \\
$\Vec{x}$ for some $x_0,...,x_n$ \\
$\Vec{\Bar{V}}$ with length equal to the length of $\Vec{x}$ \\
$N_x \gets$ length of $\Vec{x}$ \\
\vspace{2mm}
\textit{\# loop to estimate each value in $\Vec{\Bar{V}}$ } \\
\For{$i$ in $0,...,(N_x-1)$}{
  \uIf{i = 0}{
    $\Bar{V}_0 \gets 1-\frac{\lambda}{c}\E[Y]$ 
  }
  \uElseIf{i = 1}{
    $\Bar{V}_1 \gets$ take case $i = 1$ from above and isolate $\Bar{V}_1$\\
  }
  \Else{
    $\Bar{V}_i \gets$ take case $i > 1$ from above and isolate $\Bar{V}_i$ \\
  }
}

\textbf{Return} $\Vec{\Bar{V}}$, $\Vec{x}$ 

 \caption{\textbf{Estimation of $\Bar{V}(x)$}}
\end{algorithm}

\section*{Acknowledgements}
The second and third authors acknowledge financial support from FCT – Funda\c c\~ao para a Ci\^encia e Tecnologia (Portugal), national funding, through research grant UIDB/05069/2020.


\end{document}